\journal{}
\newtheorem{theorem}{Theorem}
\newtheorem{definition}[theorem]{Definition}
\newtheorem{lemma}[theorem]{Lemma}
\newtheorem{proposition}[theorem]{Proposition}
\newtheorem{corollary}[theorem]{Corollary}
\newtheorem{rmk}[theorem]{Remark}
\newtheorem{claim}[theorem]{Claim}
\newproof{proof}{Proof}
\newproof{pot}{Proof of Theorem \ref{thm2}}
\begin{document}

\begin{frontmatter}

%% Title, authors and addresses

%% use the tnoteref command within \title for footnotes;
%% use the tnotetext command for theassociated footnote;
%% use the fnref command within \author or \address for footnotes;
%% use the fntext command for theassociated footnote;
%% use the corref command within \author for corresponding author footnotes;
%% use the cortext command for theassociated footnote;
%% use the ead command for the email address,
%% and the form \ead[url] for the home page:
%% \title{Title\tnoteref{label1}}
%% \tnotetext[label1]{}
%% \author{Name\corref{cor1}\fnref{label2}}
%% \ead{email address}
%% \ead[url]{home page}
%% \fntext[label2]{}
%% \cortext[cor1]{}
%% \address{Address\fnref{label3}}
%% \fntext[label3]{}

\title{Context-Free Commutative Grammars with Integer Counters and
  Resets}

%% use optional labels to link authors explicitly to addresses:
%% \author[label1,label2]{}
%% \address[label1]{}
%% \address[label2]{}

\author[sws]{Dmitry Chistikov\fnref{fn2}}
\address[sws]{Max Planck Institute for Software Systems (MPI-SWS), Germany}

\author[cachan]{Christoph Haase\fnref{fn1}}
\address[cachan]{LSV, CNRS \&
  ENS Cachan, Universit\'e Paris-Saclay, France}
\fntext[fn2]{Sponsored in part by the ERC Synergy award ImPACT.
  Present address: Department of Computer Science, University of Oxford, UK.}
\fntext[fn1]{Supported by Labex Digicosme, Univ. Paris-Saclay, project
  VERICONISS.}

\author[cachan]{Simon Halfon}

\begin{abstract}
  We study the computational complexity of reachability, coverability
  and inclusion for extensions of context-free commutative grammars
  with integer counters and reset operations on them. Those grammars
  can alternatively be viewed as an extension of communication-free
  Petri nets. Our main results are that reachability and
  coverability are inter-reducible and both NP-complete. In particular,
  this class of commutative grammars enjoys semi-linear reachability
  sets. We also show that the inclusion problem is, in general,
  coNEXP-complete and already $\Pi_2^\text{P}$-complete for grammars
  with only one non-terminal symbol. Showing the lower bound for the
  latter result requires us to develop a novel
  $\Pi_2^\text{P}$-complete variant of the classic subset sum
  problem.
\end{abstract}

\begin{keyword}
  context-free commutative grammars
  \sep communication-free Petri nets
  \sep reset nets
  \sep vector addition systems with states
  \sep Presburger arithmetic
  \sep subset sum
%% keywords here, in the form: keyword \sep keyword

%% PACS codes here, in the form: \PACS code \sep code

%% MSC codes here, in the form: \MSC code \sep code
%% or \MSC[2008] code \sep code (2000 is the default)

\end{keyword}

\end{frontmatter}

% I Introduction

\section{Introduction}

This paper studies the computational complexity of certain decision
problems for extensions of context-free commutative grammars with
integer counters and reset operations on them. The motivation for our
work comes from the close relationship of such grammars with
subclasses of Petri nets. For presentational purposes, we begin with
introducing the decision problems we consider in terms of
Petri nets.

Petri nets, or equivalently Vector Addition Systems with States
(\VASS), are a prominent and appealing class of infinite-state
systems, from both theoretical and practical perspectives. On
the one hand, their high level of abstraction allows them to be used
as a mathematical model with well-defined semantics in a wide range of
application domains, in particular but not limited to the verification
of concurrent programs, see, e.g.,~\cite{GS92}. On the other hand, for
half a century Petri nets have provided a pool of challenging and
intricate decision problems and questions about their structural
properties. One of the most important and well-known instance is the
question about the computational complexity of the reachability
problem for Petri nets, which has attracted the attention of
generations of researchers without, however, having been fully
resolved.

A Petri net comprises a finite set of places with a finite number of
transitions. Places may contain a finite number of tokens, and a
transition can consume tokens from places, provided sufficiently many
are present, and then add a finite number of tokens to some places. In
the \VASS\ setting, places are referred to as counters and we will
often use these terms interchangeably in this paper. A
configuration of a Petri net is a marking of its places, which is
just a function $\vec{m} \colon \mathrm{Places}\to \nat$ or, equivalently,
a vector of natural numbers whose components are indexed by elements
from $\mathrm{Places}$. The most prominent decision problems for Petri
nets are reachability, coverability and inclusion. Given
configurations $\vec{m}$ and $\vec{n}$ of a Petri net $\mathcal{A}$,
reachability is to decide whether there is a sequence of transitions
of $\mathcal{A}$ whose effect transforms $\vec{m}$ into
$\vec{n}$. Coverability asks whether there is a transition sequence
from $\vec{m}$ to a configuration that is ``above'' $\vec{n}$, i.e., a
path to some configuration $\vec{n}'$ such that $\vec{n}'\ge \vec{n}$,
where $\ge$ is interpreted component-wise. Finally, given Petri nets
$\mathcal{A}$ and $\mathcal{B}$ with the same set of places, inclusion
asks whether the set of markings reachable in $\mathcal{A}$ is
contained in the set of those reachable in $\mathcal{B}$. All of these
problems have been extensively studied in the
literature. One of the earliest results was obtained by Lipton, who
showed that reachability and coverability
are \EXPSPACE-hard~\cite{Lipt76}. Subsequently, Rackoff established a
matching upper bound for coverability~\cite{Rack78}, and Mayr showed
that reachability is decidable~\cite{Mayr81}. This result was later
refined~\cite{Kos82,Lam92} and shown in a different way in~\cite{Leroux12}, and an actual
complexity-theoretic upper bound, namely membership in
$\mathbf{F}_{\omega^3}$, a level of the fast-growing hierarchy, was
only recently established~\cite{LS15}. For inclusion, it is known that
this problem is in general undecidable~\cite{Hack76} and Ackermann
($\mathbf{F}_{\omega}$)-complete when restricting to Petri nets with a
finite reachability set~\cite{Jan01}.

For some application domains, standard Petri nets are not
sufficiently expressive. For instance, as discussed, e.g.,
in~\cite{KKW14}, the verification of concurrent finite-state
shared-memory programs requires additional operations on places such
as transfers, where the content of one place can be copied to another
one. Another example is the validation of business processes, which
requires reset operations on places, i.e., a special kind of
transitions which assign the value zero to some
place~\cite{WAHE09}. The computational price for these
extensions is high: reachability in the presence of any such extension
becomes undecidable~\cite{DFS98,FGH13}, while the complexity of
coverability increases significantly to Ackermann
($\mathbf{F}_{\omega}$)-completeness in the presence of
resets~\cite{Schn10}.

One of the main sources of the high complexity of decision problems
for Petri nets and their extensions is the restriction that the places
contain a \emph{non-negative} number of tokens. This restriction
enables one to enforce an order in which transitions can be taken, which
is at the heart of many hardness proofs. In this paper, we relax this
restriction and study the computational complexity of decision
problems for a subclass of Petri nets, where the nets have additional counters that
range over the integers and can be reset and where transitions
are also structurally restricted. One advantage of this class is the
decidability and a much lower computational complexity of standard
decision problems when compared to usual Petri nets with reset
operations.

\subsubsection*{Our contribution.} 

The main focus of this paper is the computational complexity
of reachability, coverability and inclusion for so-called
communication-free Petri nets extended with integer counters and
resets, and for subclasses thereof. A communication-free Petri net is a
Petri net in which every transition can remove a token from at most
one place. An important property of communication-free Petri nets is
that their sets of reachable markings are
semi-linear~\cite{Huy85,Yen97,MW15}, meaning in particular that they
are closed under all Boolean operations (this is not the case for
general Petri nets~\cite{HP79}). Communication-free Petri nets are
essentially equivalent\footnote{This will be made more precise in
  Section~\ref{ssec:relationship}.}  to context-free commutative
grammars, or basic parallel processes, and have extensively been studied
in the
literature~\cite{Huy83,Huy86,Esp97,Yen97,KT10,MW15,Kop15,HH15}. For
technical convenience we adopt the view of communication-free Petri
nets as context-free commutative grammars in the technical part of
this paper.

As our first main result, we show that context-free commutative
grammars can be extended by a finite number of integer counters, i.e.,
counters that range over the integers and can be reset by
transitions, while retaining \NP-completeness of reachability and
coverability, as well as preserving semi-linearity of the reachability set. This is
achieved by showing that the reachability set of our extended class
can be defined by a formula in existential Presburger arithmetic of
polynomial size. The characterization obtained in this way can then be
used in order to show \coNEXP-completeness of the inclusion problem by
application of complexity bounds for Presburger arithmetic.

Our second main result is a more refined analysis of the
complexity of the inclusion problem. We show that even in the
structurally simplest case of context-free commutative grammars with
integer counters and \emph{without any} control structure, i.e., a
singleton non-terminal alphabet, the inclusion problem is hard for the
second level of the polynomial hierarchy and, in fact,
$\ComplexityFont{\Pi_2^\P}$-complete. In essence, this problem is
equivalent to asking, given two integer matrices $A$, $B$ and a vector
$\vec{v}\in \nat^d$, whether for all $\vec{x}\in \nat^m$ there exists
some $\vec{y}\in \nat^n$ such that $A \cdot \vec{x} + B \cdot \vec{y}
= \vec{v}$. We prove hardness of this problem by developing a new
$\ComplexityFont{\Pi_2^\P}$-complete variant of the classical
\textsc{Subset Sum} problem, which we believe is a contribution of
independent interest.

This paper is an extended version of our conference paper~\cite{HH14},
which appeared in the proceedings of the 8th International Workshop on
Reachability Problems (RP~2014) held in Oxford, UK, in September 2014. It extends
the results from~\cite{HH14} by considering a more general model:
context-free commutative grammars with integer counters and resets
instead of integer vector addition systems with states and resets
considered in~\cite{HH14}; we also provide full proofs.
Moreover, in~\cite{HH14} we
left as an open question the precise complexity of the aforementioned
$\ComplexityFont{\Pi_2^\P}$-complete inclusion problem, which we could
only show to be \NP-hard and in $\ComplexityFont{\Pi_2^{\P}}$.
This question is now resolved in this paper.

\subsubsection*{Related Work.}
Apart from the related work mentioned above, closely connected to the
problems considered in this paper is the work by Kopczy{\'n}ski and
To~\cite{KT10} and Kopczy{\'n}ski~\cite{Kop15}. In their work, the
complexity of various decision problems for context-free commutative
grammars and subclasses thereof has been studied when the number of
alphabet symbols (which roughly corresponds to the number of places
in the Petri net representation) is fixed. In~\cite{Kop15}, alphabet symbols
may, informally speaking, be erased and negative quantities of
alphabet symbols are possible. This essentially corresponds to adding
counters with integer values to context-free commutative grammars. A
further generalization of communication-free Petri nets,
recently studied by Mayr and Weihmann, are communication-free
Petri nets with arbitrary edge multiplicities~\cite{MW13}. In this
class, transitions may also only consume tokens from one place, but
they may take an arbitrary number of them.

A powerful technical tool in this context that we employ in this paper
is defining Parikh images of communication-free Petri nets in
existential Presburger arithmetic. This approach has been directly or
indirectly taken, for instance,
in~\cite{Esp97,PR99,SSMH04,VSS05,HKOW09,HL11}. In particular, in this
paper we generalize a technique of Verma et al.~\cite{VSS05}, which
has also been done in~\cite{HL11} in order to show decidability and
complexity results for pushdown systems equipped with reversal-bounded
counters.

As discussed above, we achieve a lower complexity for standard
decision problems in comparison to general Petri nets by relaxing
counters to range over the integers. Another approach going into a
similar direction is to allow counters to range over the positive
reals, for instance in continuous Petri nets introduced
in~\cite{DA87}. It has been shown in recent work by Fraca and
Haddad~\cite{FH15} that the decision problems we consider in this
paper become substantially easier for such continuous Petri nets, with
reachability even being decidable in \P.

Finally, constraining the sequences of production rules applicable in
language generating devices is a classical topic in formal language
theory and commonly studied in the setting of controlled grammars. In
this context, valence grammars~\cite{Paun80} and blind counter
automata~\cite{Grei78} are closely related to our work and have led to
a large body of research, see e.g.~\cite{Hoog01,FS02,BZ13} and the
references therein. Valence grammars over the monoid $(\zint, +)$ are
context-free grammars in which every production rule is tagged with an
integer, and a word is generated by the grammar whenever the sum of
all integers that tag the production rules in its derivation equals
zero. The results of this paper allow one to obtain
complexity-theoretic upper bounds for deciding emptiness in (a
generalization of) valence grammars over this monoid.

%%% Local Variables: 
%%% mode: latex
%%% TeX-master: "main_new.tex"
%%% LocalWords: Parikh Petri et al Esparza Alla Davida Fraca Haddad Sistla priori SMT pre Karp pb PNS FO Karp Miller Verma Seidl Schwentick iter Kroening Wahl Melzer versa undef approximant unsafety BFC bfc soter CPUs Erlang sys th Spanier Rackoff Mayr Ackermann Kopczy SubsetSum Weihmann
%%% End:

% II Preliminaries

\section{Preliminaries}
\label{sec:def}

In this section, we provide basic definitions that we rely on in
this paper. First, we introduce some general notation and standard
definitions related to Presburger arithmetic and formal language
theory. We then introduce the class of context-free commutative
grammars that we study in this paper and recall some known results
about this class from the literature.

\subsection{General Notation.} 
In the following, $\zint$ and $\nat$ are the sets of integers and
natural numbers (non-negative integers), respectively, and $\nat^d$ and $\zint^d$ are the set
of dimension-$d$ vectors in $\nat$ and $\zint$, respectively.  If not
stated otherwise, all numbers in this paper are assumed to be encoded
in binary. For $a,b\in \zint$ such that $a<b$, we denote by $[a,b]$
the set $\{a, a+1,\ldots, b\}$. As an abbreviation, $[d]$ denotes
$[1,d]$. For $\vec{v} \in \zint^d$ we write $\vec{v}(i)$ for the
$i$-th component of $\vec{v}$ for $i\in [d]$. Let $z\in \zint$, we
denote by $\vec{z}$ the vector in any dimension which has value $z$ in
each of its components. Given two vectors $\vec{v}_1, \vec{v}_2 \in
\zint^d$, we write $\vec{v}_1 \ge \vec{v}_2$ if and only if for all
$i\in [d]$, $\vec{v}_1(i) \ge \vec{v}_2(i)$.  Given a vector $\vec{v}
\in \zint^d$ and a set $R \subseteq [d]$, by $\vec{v}_{|R}$ we denote
the vector which coincides with $\vec{v}$ except for components from
$R$ which are \emph{reset} to zero, \emph{i.e.},
\[
\vec{v}_{|R}(i) \defeq 
\begin{cases}
   \vec{v}(i) & \text{if } i\not\in R\\
   0 & \text{otherwise}.
\end{cases}
\]
We call $|R$ the \emph{reset operator}.

\subsection{Presburger Arithmetic.}\label{ssec:presburger}
The first-order theory of the structure $\langle \nat, 0, 1, +,
\ge\rangle$, i.e., quantified linear arithmetic over natural numbers,
is commonly known as \emph{Presburger arithmetic (PA)}. The size
$|\Phi|$ of a PA-formula $\Phi$ is the number of symbols required to
write it down. Two fragments of Presburger arithmetic with a fixed
number of quantifier alternations are relevant to us in this paper.
\begin{proposition}\label{prop:pa-complexity}
  The existential $\Sigma_1$-fragment of Presburger arithmetic is
  \NP-complete~\cite{BT76}. Validity in the $\Pi_2$-fragment of PA,
  i.e.\ its restriction to a $\forall^*\exists^*$-quantifier
  prefix, is $\coNEXP$-complete~\cite{Grae89,Haa14}.
\end{proposition}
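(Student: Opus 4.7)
The statement splits into two independent claims, both appearing in the cited literature~\cite{BT76,Grae89,Haa14}. For the $\Sigma_1$-fragment, \NP-hardness is obtained by a textbook reduction from \textsc{Subset Sum}: an instance $(a_1,\ldots,a_n,b)$ is encoded as
\[
  \exists\, x_1,\ldots,x_n \colon \bigwedge_{i=1}^n (x_i = 0 \vee x_i = 1) \;\wedge\; \sum_{i=1}^n a_i\, x_i = b.
\]
For the \NP upper bound, the key ingredient is a small-witness theorem in the style of Borosh--Treybig: any satisfiable conjunctive system $A\vec{x} = \vec{b}$ over $\nat^d$ of total bit-size $n$ admits a solution $\vec{x}$ with $\|\vec{x}\|_\infty \le 2^{p(n)}$ for some fixed polynomial $p$. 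After putting the quantifier-free matrix of the formula in disjunctive normal form and dispatching its disjunctions to nondeterministic choices, the algorithm guesses the binary representation of the witness (of polynomial length) and verifies the chosen conjunction in polynomial time; no serious obstacle arises here.

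For the $\Pi_2$-fragment, the \coNEXP upper bound I would pursue proceeds by Presburger quantifier elimination on the inner existential block. The subformula $\exists \vec{y}\colon \psi(\vec{x},\vec{y})$ is equivalent to a quantifier-free formula $\psi'(\vec{x})$ of size at most singly exponential in $|\psi|$, via a Cooper-style elimination combined with an interval decomposition modulo the least common multiple of the coefficients involved. The resulting sentence $\forall \vec{x}\colon \psi'(\vec{x})$ is then a $\Pi_1$-statement of exponential size, whose validity is in \coNP in the formula size by dualising the small-model argument above, and hence in \coNEXP when measured in the original input. The delicate step is the bookkeeping needed to keep the blow-up singly rather than iteratedly exponential.

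For \coNEXP-hardness, I would reduce from the complement of a natural \NEXP-complete problem, for instance succinctly encoded $3$-\textsc{Sat}: the input is a polynomial-size Boolean circuit $C$ that outputs the $i$-th clause of an exponentially large $3$-CNF on $2^n$ variables, and the question is whether the encoded formula is unsatisfiable. I would express ``for every assignment $\vec{x}$ there exists an index $i$ such that the clause $C(i)$ is violated by $\vec{x}$'' as a $\Pi_2$-PA sentence, representing $\vec{x}$ by polynomially many large integer variables whose binary digits carry the $2^n$ assignment bits, and simulating the gate-by-gate evaluation of $C$ on a Presburger-encoded index $i$ through polynomially many linear constraints over $\zint$. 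Designing these bit-extraction and circuit-simulation gadgets in plain Presburger is the principal technical hurdle of the lower bound.
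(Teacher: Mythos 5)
First, note that the paper does not prove this proposition at all: it is imported wholesale from the literature (\cite{BT76} for the $\Sigma_1$-fragment, \cite{Grae89,Haa14} for the $\Pi_2$-fragment), so there is no ``paper proof'' to match your argument against. Judged on its own terms, your $\Sigma_1$ part is the standard argument and is essentially fine (one quibble: you should not literally construct the DNF of the matrix, which can be exponentially large, but rather nondeterministically guess a consistent set of literals over the atomic formulas; the Borosh--Treybig witness bound, extended to systems of inequalities, then does the rest). Your $\Pi_2$ upper bound via eliminating the inner existential block is also a legitimate route, but be aware that the singly-exponential bound for eliminating a whole \emph{block} of quantifiers is itself a non-trivial theorem (due to Weispfenning); naive Cooper elimination variable-by-variable gives an iterated blow-up, so the ``bookkeeping'' you defer is really the entire content of that direction.

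The genuine gap is in your \coNEXP-hardness sketch. Your plan requires extracting the $j$-th bit of a universally quantified integer $x$ (encoding an assignment to $2^n$ Boolean variables) at a position $j$ that is itself a quantified variable ranging over $[0,2^n-1]$, and feeding circuit-computed indices into such extractions. This cannot be done ``through polynomially many linear constraints over $\zint$'': any Presburger-definable relation is semi-linear, and the relation $\{(x,j): \text{bit } j \text{ of } x \text{ is } 1\}$ is not (already $\{(x,j): x\ge 2^j\}$ fails semi-linearity, since sections of a finite union of linear sets grow at most linearly in $j$). Since $j$ takes exponentially many values, you also cannot fall back on a finite case split. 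This is precisely why the known hardness proofs (Gr\"adel's reduction from exponential tiling problems, and the refinement in~\cite{Haa14}) are delicate: they must exploit the quantifier alternation itself to build polynomial-size formulas defining suitably \emph{bounded} multiplication, division and exponentiation (in the style of Fischer--Rabin) before any such indexing or decoding of exponentially long objects becomes expressible. Without that machinery, the reduction as you describe it cannot be completed.
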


Given a PA-formula $\Phi(x_1,\ldots,x_d)$ in $d$ free variables, we
define
\begin{align*}
  \eval{\Phi(x_1,\ldots,x_d)} & \defeq \{ (n_1,\ldots,n_d)\in \nat^d : 
  \Phi(n_1/x_1,\ldots,n_d/x_d) 
  \text{ is valid}\}.
\end{align*}
Here, $\Phi(n_1/x_1,\ldots,n_d/x_d)$ is obtained from $\Phi$ by
replacing every $x_i$ with $n_i$; we also write
$\Phi(\vec{n}/\vec{x})$ as a shorthand for replacing the components of
$\vec{x}$ with the respective components of $\vec{n}$ in $\Phi$. For
notational convenience, we sometimes use vectors of vectors of
first-order variables and denote them by bold capital letters, e.g.,
$\vec{X}=(\vec{x}_1,\ldots,\vec{x}_k)$, where the $\vec{x}_i$ are
vectors of first-order variables.

A set $M\subseteq \nat^d$ is \emph{PA-definable} if there exists a PA
formula $\Phi(x_1,\ldots,x_d)$ such that
$M=\eval{\Phi(x_1,\ldots,x_d)}$. Recall that a result due to Ginsburg
\& Spanier states that PA-definable sets coincide with
\emph{semi-linear sets}~\cite{GS66}. A subset of $M\subseteq \nat^d$
is linear if there exist $\vec{b}\in \nat^d$ and $Q=\{
\vec{q}_1,\ldots, \vec{q}_n\}\subseteq \nat^d$ such that
\[
M = L(\vec{b}, Q) \defeq  \vec{b} + \{ \lambda_1 \cdot \vec{q}_1 + \cdots + 
\lambda_n \cdot \vec{q}_n : \lambda_i \in \nat\};
\]
semi-linear sets are finite unions of linear sets and are closed under
all Boolean operations~\cite{GS66}.

In this paper, we sometimes wish to define subsets of $\zint^d$ via
formulas of Presburger arithmetic. Clearly, any integer $z$ can be
represented as the difference of two natural numbers $x$ and
$y$. Hence, the homomorphism $h: \nat^{2d} \to \zint^d$ defined as
\[
h: (x_1,y_1,\ldots,x_n,y_n) \mapsto (x_1-y_1,\ldots,x_n-y_n)
\]
can be lifted in order to uniquely assign
a subset of $\zint^d$ to every subset of $\nat^{2d}$.
Thus, whenever it is convenient for us, we may with
no loss of generality interpret some open variables of formulas of
Presburger arithmetic in the integers (we will explicitly mention such cases).

\subsection{Formal Languages.}
Let $\Sigma=\{a_1,\ldots,a_m\}$ be a finite alphabet. The free monoid
generated by $\Sigma$ is denoted by $\Sigma^*$, and by $\Sigma^\odot$
we denote the free commutative monoid generated by $\Sigma$.
Elements of $\Sigma^*$ are words, i.e., finite sequences of elements
from $\Sigma$, with the usual concatenation operation $\cdot$.
Elements of $\Sigma^\odot$ are commutative words; we treat them
as mappings of the form $\Sigma \to \nat$,
or, equivalently, as vectors from $\nat^m$ with component-wise addition.
The empty word is denoted by~$\varepsilon$. Given $w\in \Sigma^* \cup
\Sigma^\odot$ and $a\in \Sigma$, $|w|_a$ denotes the number of times
$a$ occurs in the (usual or commutative) word~$w$.
We interchangeably use different equivalent ways in
order to represent a word $w\in \Sigma^\odot$. For $j\in [m]$ let
$i_j=|w|_{a_j}$; we equivalently write $w$ as
$w=a_1^{i_1}a_2^{i_2}\cdots a_m^{i_m}$, $w=(i_1,i_2,\ldots,i_m)\in
\nat^m$ or $w:\Sigma \to \nat$ with $w(a_j)=i_j$, whichever is most
convenient. Given $v,w\in \Sigma^\odot$, we write $v+w$
to denote the sum of $v$ and $w$.  Given
$w\in \Sigma^*$, we denote by $\pi(w)\in \Sigma^\odot$ its Parikh
image, i.e., $\pi(w)\defeq (|w|_{a_1},\ldots,|w|_{a_m})$.

Viewing commutative words as elements of $\nat^m$ allows us to employ
them inside formulas of Presburger arithmetic. In particular, given a
vector $\vec{x}=(x_1,\ldots, x_m)$ of first-order variables and a
commutative word $w = ({i_1}, \dots, {i_m}) \in \Sigma^\odot$, then
$\vec{x}=w$ abbreviates $\bigwedge_{1\le j\le m} x_j = i_j$.

\subsection{Context-Free Commutative Grammars with Integer Counters
and Resets.}\label{ssec:zcfcgr}

The main objects studied in this paper are derived from a general
class of context-free commutative grammars equipped with integer
counters\footnote{In the literature, such counters are often also
  called \emph{blind counters}.} which can be reset, incremented or
decremented when production rules are applied. Formally, these grammars
are defined as follows.
\begin{definition} 
  A \emph{context-free commutative grammar with integer counters and
    resets (\ZCFGr)} is a quadruple $\mathcal{G}=(N, C, P, S)$ where
  \begin{itemize}
  \item $N$ is a finite alphabet of \emph{non-terminal symbols};
  \item $C$ is a finite set of \emph{counters};
  \item $P \subseteq N \times 2^C \times \zint^C \times N^\odot$ is a
    finite set of \emph{production rules}; and
  \item $S \in N$ is the \emph{axiom}.
  \end{itemize}
\end{definition}

We often write $p \in P$ as a tuple of elements indexed by $p$, i.e.,
as $p = (a_p, R_p, \vec{z}_p, w_p)$. Informally, the
production $p$ can be applied whenever the non-terminal $a_p$ is
available; it then resets the counters specified by $R_p$ and adds
$\vec{z}_p$ to all counters while producing non-terminal symbols
$w_p$. Formally, let $C(\mathcal{G}) \defeq N^\odot \times \zint^C$ be
the \emph{set of configurations of $\mathcal{G}$}. Given
configurations $(s, \vec{u}), (t, \vec{v})\in C(\mathcal{G})$ and
$p\in P$, we write $(s, \vec{u})\xrightarrow{p}_\mathcal{G} (t,
\vec{v})$ if there is some $w\in N^\odot$ such that
\begin{itemize}
\item $s=w+a_p$,
\item $t = w + w_p$; and
\item $\vec{v} =\vec{u}_{|R} + \vec{z}_p$.
\end{itemize}
We write $(s, \vec{u}) \rightarrow_{\mathcal{G}} (t, \vec{v})$
whenever $(s,\vec{u})\xrightarrow{p}_\mathcal{G} (t, \vec{v})$ for
some $p\in P$.

A \emph{run} is a word $\gamma=p_1\cdots p_n\in P^*$ such that there
exists a finite sequence of configurations $\varrho: c_0 c_1 \cdots
c_n$ such that $c_i\xrightarrow{p_{i+1}}_{\mathcal{G}} c_{i+1}$ for
all $0\le i<n$, and we write $c_0 \xrightarrow{\gamma}_\mathcal{G}
c_n$ in this case. Furthermore, we write $c \rightarrow^*_\mathcal{G} c'$
if there is a run $\gamma \in P^*$ such that
$c\xrightarrow{\gamma}_\mathcal{G} c'$.
We drop the subscript $\mathcal{G}$ if it is clear from the context.
Given $\vec{u} \in \zint^C$,
the \emph{reachability set starting from $\vec{u}$} is defined as
\[
  \mathit{reach}(\mathcal{G},\vec{u}) = \{ \vec{v}\in \zint^C :
  (S, \vec{u}) \rightarrow^*_{\mathcal{G}} (t, \vec{v}) \text{ for some } 
  t\in N^\odot \}.
\]

  %% %
  %% If $\Delta \subset Q \times \zint^d \times Q$ then $\mathcal{A}$ is
  %% called \ZVASS. Moreover, if in addition $|Q|=1$ we call
  %% $\mathcal{A}$ a \ZVAS.

%Alternatively, a run $\varrho$ can be described as a word $\gamma_\varrho \in \Delta^*$ given by the transitions between two configurations. Moreover, we introduce the function $\delta: \Delta^* \rightarrow C \rightarrow C$ such that for any $\varrho: c \rightarrow^*_\mathcal{A} c'$, $\delta(\gamma_\varrho)(c) = c'$. 

\begin{rmk}
  Context-free (commutative) grammars are commonly used as language
  acceptors or generators. In our setting, when restricting counter
  updates to $\nat$ (i.e., when $P \subseteq N \times 2^C \times \nat^C \times N^\odot$),
  we may view \ZCFGr \ as generators of languages
  over $C^\odot$.
\end{rmk}

In this paper, we study the computational complexity of deciding
reachability, coverability and inclusion in \ZCFGr.
\vspace*{0.25cm}
\problemx{\ZCFGr \ Reachability/Coverability/Inclusion}
  {\ZCFGr \ $\mathcal{G}$, $\mathcal{H}$ over the same set of counters $C$
    and configurations 
    $(s,\vec{u}), (t,\vec{v})\in C(\mathcal{G})$,
    $\vec{v},\vec{v}' \in \zint^C$.}
  {\emph{Reachability:} Is there a run $(s, \vec{u})\rightarrow^*_\mathcal{G} (t, \vec{v})$?\\
    \emph{Coverability:} Is there a $\vec{z}\in \zint^C$ such that
    $(s, \vec{u}) \rightarrow^*_\mathcal{G} (t, \vec{z})$ and 
    $\vec{z}\ge \vec{v}$?\\
    \emph{Inclusion:} Does $\mathit{reach}(\mathcal{G},\vec{u}) \subseteq 
    \mathit{reach}(\mathcal{H}, \vec{v})$ hold? 
  }
\vspace*{0.25cm}
We also study and discuss natural subclasses of \ZCFGr \ where we
restrict the use of reset operations or the set of productions of
the grammar. A \ZCFGr \ $\mathcal{G} = (N, C, P, S)$ is an
\begin{itemize}
\item \emph{integer vector addition system with states (\ZVASS)} if $P
  \subseteq N \times \{\emptyset\} \times \zint^C \times (N \cup \{ \varepsilon \})$;
\item \emph{integer vector addition system (\ZVAS)} if $\mathcal{G}$
  is a \ZVASS \ and $N = \{ S \}$.
\end{itemize}
\ZVASS\ are obtained from \ZCFGr \ by restricting the grammar to be
left-linear and by disallowing resets. We use them in order to obtain
stronger lower bounds. Left-linear context-free grammars are known to
recognize regular languages, and equivalently, \ZVASS \ can be seen as
finite-state automata equipped with integer counters. Formalized in
this manner, it is easier to see that classical vector addition
systems with states (\VASS) can be recovered from the definition of
\ZVASS \ by restricting the set of configurations to $(N \cup \{
\varepsilon \}) \times \nat^C$ and adjusting the definition of
$\rightarrow_{\mathcal{G}}$ appropriately. This justifies
the term~``\ZVASS''. Note that in \ZVASS, we restrict commutative words in
configurations to length at most one, and we restrict the
reachability problem accordingly.

Finally, note that a \ZVAS \ ${\mathcal A} = (\{ S \}, C, P, S)$ can
simply be represented by a matrix $A \in \zint^{d\times k}$ where $d =
|C|$ and $k = |P|$ and $A$ is the matrix whose columns are $\vec{z}_p$
for $p \in P$. The matrix $A$ has the property that for all $\gamma
\in P^*$, $(S, \vec{u}) \xrightarrow{\gamma} (S, \vec{u} + A \cdot
\pi(\gamma))$. Consequently, reachability in \ZVAS\ and all classes
subsuming \ZVAS\ is \NP-hard, which can be shown by a reduction from
the feasibility problem of a system of linear Diophantine equations
$A\cdot \vec{x} = \vec{u}, \vec{x} \ge
\vec{0}$.\label{anc:matrix-reachability} This problem is known to be
\NP-hard even when numbers are encoded in unary~\cite{GJ79}. We have
that $A\cdot \vec{x} = \vec{u}, \vec{x} \ge \vec{0}$ is valid if and
only if $(S, -\vec{u}) \rightarrow^*_{\mathcal{A}} (S, \vec{0})$ in
the corresponding \ZVAS.

\subsubsection*{Relationship to Communication-Free Petri Nets.}\label{ssec:relationship}
As already stated in the introduction, context-free commutative
grammars are closely related to communication-free Petri nets. For
inter-reducibility results, see, e.g.,~\cite{MW15}. For the sake of
completeness, here we briefly state the relationship on an
informal level. 

Viewed in our framework, context-free commutative grammars are
\ZCFGr\ whose integer counters can only be incremented and thus
correspond to \emph{terminal symbols}. Context-free commutative
grammars correspond to communication-free Petri nets by viewing the
set of non-terminal and terminal symbols as the set of places of the
Petri net.
Similarly, \ZCFGr\ correspond to communication-free
Petri nets which are additionally equipped with special places that
can take a possibly negative number of tokens and with special arcs
that can set the number of tokens on those counters to zero.
All upper bounds for context-free commutative
grammars carry over to communication-free Petri nets, and all lower
bounds for communication-free Petri nets carry over to context-free
commutative grammars, see for example~\cite{MW15}.

%%% Local Variables: 
%%% mode: latex
%%% TeX-master: "main_new.tex"
%%% LocalWords: Parikh Petri et al Esparza Alla Davida Fraca Haddad Sistla priori SMT pre Karp pb PNS FO Karp Miller Verma Seidl Schwentick iter Kroening Wahl Melzer versa undef approximant unsafety BFC bfc soter CPUs Erlang sys th Spanier
%%% End:

%% % III Proof of lower bounds
%% \input{lower.tex}
% IV Proof of NP membership for ZVASSr
\section{Reachability and Coverability in \ZCFGr}\label{sec:zvass}

In this section, we consider the reachability and coverability problem
for \ZCFGr. We begin by showing that reachability and coverability are
logarithmic-space interreducible; such a reduction is not known for
general Petri nets and cannot exist for Petri nets equipped with reset
operations since reachability in such nets is undecidable whereas
coverability is decidable~\cite{DFS98}. Subsequently, we show that
reachability and hence coverability in \ZCFGr\ is \NP-complete by
showing that the reachability relation is definable by a sentence in
existential Presburger arithmetic of polynomial size.

\subsection{Reachability and Coverability are Interreducible}\label{ssec:interrdecuction}

Here, we show that reachability and coverability are logarithmic-space
interreducible in \ZCFGr\ and all of the subclasses we introduced in
Section~\ref{ssec:zcfcgr}. Thanks to this observation, all lower and upper
bounds for reachability carry over to coverability, and \emph{vice
  versa}.
\begin{theorem}\label{equiv}
  Reachability and coverability are logarithmic-space interreducible
  in each of the classes \ZCFGr, \ZVASS \ and \ZVAS. The reduction
  doubles the number of counters.
\end{theorem}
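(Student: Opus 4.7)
The plan is to give two separate logarithmic-space reductions, one in each direction. The reduction from reachability to coverability is where the doubling of counters happens; the reverse reduction adds no new counters, though in the class $\ZCFGr$ it needs one extra non-terminal to handle an obstacle that does not arise in the reset-free subclasses. Both constructions preserve membership in each of the three classes $\ZCFGr$, $\ZVASS$, and $\ZVAS$, and are clearly log-space computable.

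For reachability $\le$ coverability, I would introduce a mirror counter $\bar c$ for every $c \in C$ and replace each production $p = (a_p, R_p, \vec z_p, w_p)$ of $\mathcal G$ by $p' = (a_p, R_p \cup \bar R_p, (\vec z_p, -\vec z_p), w_p)$, where $\bar R_p \defeq \{\bar c : c \in R_p\}$. With initial configuration $(s, (\vec u, -\vec u))$ and coverability target $(t, (\vec v, -\vec v))$, a short induction on run length gives the invariant $c(i) + \bar c(i) = 0$ at every reachable configuration: non-reset updates cancel in pairs, and a joint reset sends both counters to $0$ before matched opposite contributions re-establish the equality. Under this invariant the coverage inequalities $c(i) \ge v_i$ and $\bar c(i) \ge -v_i$ collapse to $c(i) = v_i$, so covering $(t, (\vec v, -\vec v))$ in $\mathcal G'$ is equivalent to reaching $(t, \vec v)$ in $\mathcal G$. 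No new non-terminals or rule shapes are introduced, so class membership is preserved.

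For coverability $\le$ reachability, the starting idea is to add, for each $a \in N$ and each $c_i \in C$, a decrement production $(a, \emptyset, -\vec e_i, a)$ that subtracts one from $c_i$ while leaving $a$ in place. In the reset-free classes $\ZVASS$ and $\ZVAS$ this already suffices: given a cover-witness $(s, \vec u) \to^*_{\mathcal G} (t, \vec z)$ with $\vec z \ge \vec v$, insert $\vec z(c_i) - \vec v(c_i)$ extra $c_i$-decrements either after the final step (if $t \ne \varepsilon$) or just before it, using the non-terminal that step consumes; conversely, stripping all decrement steps from any reach-witness yields a run in $\mathcal G$ whose final counter values are by monotonicity coordinate-wise at least as large. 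The main obstacle is the full $\ZCFGr$ case: a decrement of $c_i$ applied before a later reset of $c_i$ is silently erased, and the naive construction then fails. The toy grammar with the single production $p = (S, \{c\}, (10), \varepsilon)$ illustrates this: it covers $(\varepsilon, (5))$ from $(S, (0))$, but no amount of decrements lets the naively augmented grammar reach $(\varepsilon, (5))$ exactly, since $p$ both pins $c$ to $10$ and ends the computation.

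To repair the reset case I would introduce a single fresh ``sticky'' non-terminal $T$ (and no new counters), take the initial configuration of the reachability instance to be $(s + T, \vec u)$ with target $(t, \vec v)$ unchanged, and equip $T$ with the decrement productions $(T, \emptyset, -\vec e_i, T)$ together with one consumption rule $(T, \emptyset, \vec 0, \varepsilon)$; the original productions of $\mathcal G$ never touch $T$. The forward witness runs the original productions first to reach $(t + T, \vec z)$ with $\vec z \ge \vec v$, then applies the required decrements at $T$ (safely, since they happen after every original production and hence after every original reset), then consumes $T$. For soundness, stripping every decrement and $T$-consumption from a reach-witness leaves a valid run in $\mathcal G$, and the main lemma to check in the reset setting is that removing a single decrement can only coordinate-wise increase the final counter values, which does hold: a decrement followed by a later reset of the same counter contributes nothing to the final value in either run, while a decrement not followed by such a reset strictly raises the final value when removed. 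This patch adds one non-terminal but no counters, is allowed within $\ZCFGr$, and is unnecessary for the reset-free subclasses.
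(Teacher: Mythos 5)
Your proposal is correct and follows essentially the same route as the paper: the reachability-to-coverability direction is the paper's mirror-counter construction verbatim, and the coverability-to-reachability direction is the paper's lossy construction, with your fresh sticky non-terminal $T$ (carrying the decrement productions and consumed at the end) playing exactly the role of the extra axiom token $S$ that the paper adds to both the source and the target configuration. Your explicit treatment of why decrements must be applicable after all resets makes precise what the paper dismisses as ``easily seen''; the only detail to double-check is the degenerate case $s=\varepsilon$ in the reset-free subclasses, where no decrement production can ever fire, which is handled by a trivial separate check (or, as in the paper, by keeping the auxiliary token in the target configuration as well).
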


\begin{proof}
We first show how to reduce coverability to reachability. We adapt the
folklore construction used for reducing coverability in \VASS\ to
reachability in \VASS. This reduction adds extra transitions in order
to make the \VASS\ \emph{lossy}, i.e.\ transitions that allow counters
to be non-deterministically decremented at any time. To this end, let
${\mathcal G} = (N, C, P, S)$ be a \ZCFGr\ and $(s, \vec{u}), (t,
\vec{v})$ two configurations of $\mathcal G$. Define ${\mathcal H} =
(N, C, P', S)$ where $P' = P \cup \{ (S, \emptyset, -\vec{c}, S) :
c\in C \}$. Here $\vec{c}$ is the vector $\vec{c}: C \to \nat$
such that $\vec{c}(c)=1$ and $\vec{c}(c')=0$ for all $c'\neq c$. It is
easily seen that $(t, \vec{v})$ can be covered in $\mathcal G$
starting at $(s, \vec{u})$ if and only if there is a run $(s+S,
\vec{u}) \rightarrow^*_{\mathcal H} (t+S, \vec{v})$ in $\mathcal H$
due to the monotonicity of coverability.

%% Assume there is a run $\gamma \in P^*$ such that $(s, \vec{u}) \xrightarrow{\gamma}_{\mathcal G} (t, \vec{z})$ for some $\vec{z} \ge \vec{v}$. Then we have $(s+E, \vec{u}) \xrightarrow{\gamma}_{\mathcal H} (t+E, \vec{z})$, and using the \emph{lossy productions} of $\mathcal H$:
%% \begin{align*}
%%  (t+E, \vec{z}) \rightarrow^*_{\mathcal H} (t+E, \vec{v}) \rightarrow_{\mathcal H} (t, \vec{v})
%% \end{align*}

%% Conversely, if there is a run $(s+E, \vec{u}) \rightarrow^*_{\mathcal H} (t, \vec{v})$ in $\mathcal H$, then it can be re-organized such that all lossy productions on a counter $c \in C$ that occur before a reset of $c$ are deleted, and the others are postponned to the end of the run. The production that delete the letter $E$ can also be assumed to be the very last production of the run (this production must occur once and only once in the run). Therefore we obtain a run of the form:
%% \begin{align*}
%% (s+E, \vec{u}) \xrightarrow{\gamma_1}_{\mathcal H} (t, \vec{z}) \xrightarrow{\gamma_2}_{\mathcal H} (t+E, \vec{v}) \rightarrow_{\mathcal H} (t, \vec{v})
%% \end{align*}
%% for some $\vec{z} \ge \vec{v}$, $\gamma_1 \in P^*$ and $\gamma_2$ made of lossy productions. We thus obtained a run $\gamma_1$ in $\mathcal G$ that satisfy our coverability instance.

We now show how to reduce reachability to coverability. Let ${\mathcal
  G} = (N, C, P, S)$ be a \ZCFGr\ and let $(s,\vec{u}),(t,\vec{v})\in
C(\mathcal{G})$. We construct a \ZCFGr\ $\mathcal{H} = (N, C \uplus
\tilde{C}, P', S)$ where $\tilde{C}\defeq \{ \tilde{c} : c \in C\}$
consists of an additional disjoint copy of $C$ and $P'$ contains a
production of the form $(a, R \cup \tilde{R}, (\vec{z}, -\vec{z}), w)$
whenever $(a, R, \vec{z}, w)$ is a production of $P$. The
\ZCFGr\ $\mathcal{H}$ therefore has the two following properties:
\begin{itemize}
\item starting from a configuration $(s, (\vec{u}, -\vec{u}))$, any
  configuration reached is of the form $(t, (\vec{v}, -\vec{v}))$; and
\item $(s, \vec{u})\rightarrow^*_{\mathcal{G}} (t,\vec{v})$ iff $(s,
  (\vec{u},-\vec{u}))\rightarrow^*_{\mathcal{H}} (t,
  (\vec{v},-\vec{v}))$.
\end{itemize}
These properties are easily shown by induction on the length of the
run. Consequently, the configuration $(t, \vec{v}, -\vec{v}))$ can be
covered starting at $(s, (\vec{u}, -\vec{u}))$ in $\mathcal H$ if and
only if there exists some $\vec{z}$ such that $(s, (\vec{u},
-\vec{u})) \rightarrow^*_{\mathcal H} (t, (\vec{z}, -\vec{z}))$ and
$\vec{z} \ge \vec{v}$ and $-\vec{z} \ge -\vec{v}$; i.e.\ if and only
if the configuration $(t, (\vec{v}, -\vec{v}))$ is actually reached in
$\mathcal H$. Consequently this is equivalent to $(t, \vec{v})$ being
reachable from $(s, \vec{u})$ in $\mathcal G$.

Finally, observe that all of the reductions described above
preserve the restrictions imposed on the subclasses of \ZCFGr\ and are
thus also valid for \ZVASS\ and \ZVAS.\qed
\end{proof}
%

%% Note that we obtain a simple proof that coverability can be reduced to zero-coverability composing the previous reductions: coverability can be reduced to reachability (Lemma~\ref{equiv}), which can be reduced to zero-reachability (Lemma~\ref{zero-reach}), which can be reduced to zero-coverability (the reduction from reachability to coverability provided in Lemma~\ref{equiv} preserves zero-instances).

\subsection{Reachability and Coverability in \ZCFGr\ are \NP-Complete}

As discussed in Section~\ref{ssec:zcfcgr}, reachability is already
\NP-hard for \ZVAS. In this section, we establish a matching upper
bound for \ZCFGr. One main idea for showing the upper bound is that
since there are no constraints on the values of the integer counters
along a run, a reset on a particular counter allows to forget any
information about the value of this counter up to this point, i.e., a
reset cuts the run. Hence, in order to determine the value of a
particular counter at the end of a run, we only need to sum up the
effects of the operations on this counter since the last occurrence of
a reset on this counter. Moreover, since addition and subtraction are
commutative, the order in which these effects occur is irrelevant.
That is, to determine whether a certain configuration on integer
counters is reached by a run, it suffices to consider the Parikh image
of this run.

%% NOTE: Consistently with our new definition of run, I left the expression ``Parikh image of the run''

Subsequently, we introduce a generalization of the notion of the
Parikh image of a run that, in effect, enables us to access the last
occurrence of a reset on a counter. This can be achieved by recording
the last occurrence of each production in $P = \{ p_1, \dots, p_k \}$,
some of which may not reset any counter at all. This idea leads to the
following unique decomposition of any run $\gamma \in P^+$ into
\emph{partial runs} $\gamma_1,\ldots, \gamma_{i_\ell}$ as
\[
  \gamma = \gamma_1 p_{i_1} \gamma_2 p_{i_2} \cdots \gamma_l p_{i_\ell}
\]
for some $\ell\le k$ such that all $i_j$ are pairwise distinct and for
all $j \in [\ell]$, $\gamma_j \in \{ p_{i_j}, \dots, p_{i_\ell}\}^*
$. This decomposition simply keeps track of the last occurrence of
each production used in $\gamma$. For instance for $P = \{ a,b,c,d,e
\}$, the word $\gamma = aaebaeabba$ can uniquely be decomposed as
$(aaeba)e(ab)b()a$.
%% The next definition extends the notion of Parikh image to words over a
%% monitored alphabet.
This decomposition is formalized in the following definition as the
\emph{generalized Parikh image} of a word.
By $\mathfrak{S}_k$ we denote the permutation group on $k$
symbols, and we sometimes treat its elements as vectors of
$\nat^k$.

\begin{definition}
\label{def}
Let $\mathcal{G} = (N, C, P, S)$ be a \ZCFGr \ with $P = \{p_1, \dots,
p_k \}$. A triple $(\vec{A},\sigma, m) =
(\vec{\alpha}_1,\vec{\alpha}_2,\dots, \vec{\alpha}_k, \sigma, m) \in
(\nat^k)^{k} \times \mathfrak{S}_k \times [k]$ is \emph{a generalized
  Parikh image of $\gamma \in P^+$} if there exists a decomposition
\[
\gamma = \gamma_m p_{\sigma(m)} \gamma_{m+1}p_{\sigma(m+1)} \cdots
\gamma_k p_{\sigma(k)}
\] 
such that
\begin{enumerate}[(i)]
\item for all $m\le i \le k$, $\gamma_i \in \{p_{\sigma(i)}, \dots, p_{\sigma(k)}\}^*$; and
\item for all $1\le i < m$, $\vec{\alpha}_i = \vec{0}$, and for all $m
  \le i \le k$, $\vec{\alpha}_i = \pi(\gamma_i)$.
\end{enumerate}
We denote by $\Pi(\gamma)$ the set of all generalized Parikh images of
a word $\gamma \in P^+$.
\end{definition}
This definition formalizes the intuition, combining the decomposition
described above with some padding by dummy vectors for
productions that do not occur in $\gamma$, in order to obtain canonical
objects of \emph{uniform size}. Even though generalized Parikh images
are not unique, two generalized Parikh images of the same word differ
only in the order of productions that do not appear
in $\gamma$.  For instance, if $P = \{ a,b,c,d,e \}$, the word
$\gamma = aaebaeabba$ has two generalized Parikh images: they agree
on $\vec{\alpha}_1 = \vec{\alpha}_2 = (0,0,0,0,0)$, $\vec{\alpha}_3 =
(3,1,0,0,1)$, $\vec{\alpha}_4 = (1,1,0,0,0)$, $\vec{\alpha}_5 =
(0,0,0,0,0)$ and $\sigma(3) = 5$, $\sigma(4) = 2$, $\sigma(5) = 1$,
and $m = 3$, and only differ on $\sigma(1)$ and $\sigma(2)$ that can
be $3$ and $4$, or $4$ and $3$, respectively.

Generalized Parikh images can now be applied to reachability in
\ZCFGr\ as follows: the counter values at the end of a run $\gamma\in
P^+$, starting from an initial configuration $(s, \vec{u})$, are fully
determined by a generalized Parikh image of $\gamma$, as shown in the
next lemma.
Recall that, if $P = \{ p_1, \ldots, p_k \}$, then
each production $p_\ell$ resets the counters in the set $R_{p_\ell} \subseteq C$.
Subsequently, for $i\in [1,k]$ we write
\[
R_i = R_{p_{\sigma(i)}} \cup \dots \cup R_{p_{\sigma(k)}};
\] 
note that $R_i$ depends on the set of productions $P$ and on the
permutation~$\sigma$. Also, $R_{k+1}$ will
denote the empty set.

%% The final counter values can now be recovered as follows. Let
%% $\mathcal{B} = (Q, \Sigma \dotcup R, p, F, q_0)$ be NFA (with
%% initial and final states to precise). The only thing left to prove is
%% that the final counter values can be recovered from the generalized
%% Parikh image.
%
\begin{lemma}
  \label{lem:parikh-to-effect}
  Let $\mathcal{G} = (N, C, P, S)$ be a \ZCFGr \ with $P = \{ p_1,
  \dots, p_k \}$, $(s, \vec{u})$ and $(t, \vec{v})$ two configurations
  of $C(\mathcal{G})$ and $\gamma \in P^+$ such that $(s, \vec{u})
  \xrightarrow{\gamma} (t, \vec{v})$. Moreover, let
  $(\vec{\alpha}_1,\dots, \vec{\alpha}_k, \sigma, m)\in\Pi(\gamma)$ be
  a generalized Parikh image of $\gamma$. Then the following holds:
  \begin{align*}
    \vec{v} & = \vec{u}_{|R_m} + \sum_{i=m}^k \left[ \left( \sum_{p\in P} 
      \vec{\alpha}_i(p) \cdot \vec{z}_p \right)_{|R_i} + \left(\vec{z}_{p_{\sigma(i)}}\right)_{|R_{i+1}} \right].
  \end{align*}
\end{lemma}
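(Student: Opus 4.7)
The plan is to establish the formula by first deriving a general expression for the cumulative effect of \emph{any} sequence of productions and then specializing it to the canonical decomposition given by the generalized Parikh image.

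First, I would record two elementary algebraic properties of the reset operator: for all $\vec{x}, \vec{y} \in \zint^C$ and $R, R' \subseteq C$,
\[
(\vec{x} + \vec{y})_{|R} = \vec{x}_{|R} + \vec{y}_{|R} \qquad \text{and} \qquad (\vec{x}_{|R'})_{|R} = \vec{x}_{|R \cup R'}.
\]
Both are immediate from the component-wise definition. Then, by a straightforward induction on $n$, I would prove an unfolding lemma: for any sequence $\delta = q_1 q_2 \cdots q_n \in P^*$ applied to a valuation $\vec{u}$, the resulting valuation is
\[
\vec{u}_{|T_1} + \sum_{i=1}^{n} (\vec{z}_{q_i})_{|T_{i+1}},
\]
where $T_j \defeq R_{q_j} \cup R_{q_{j+1}} \cup \cdots \cup R_{q_n}$ for $j \le n$ and $T_{n+1} \defeq \emptyset$. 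The inductive step uses only the effect of a single production, namely $\vec{w} \mapsto \vec{w}_{|R_{q_1}} + \vec{z}_{q_1}$, together with the two identities above.

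The main step is then to specialize this general formula to $\gamma = \gamma_m p_{\sigma(m)} \gamma_{m+1} p_{\sigma(m+1)} \cdots \gamma_k p_{\sigma(k)}$ and to check that it collapses exactly to the lemma's expression. The key combinatorial observation is: for each $j \in [m,k]$, every production occurring strictly after a position inside $\gamma_j$ lies in $\{p_{\sigma(j)}, \ldots, p_{\sigma(k)}\}$, and each $p_{\sigma(j)}, p_{\sigma(j+1)}, \ldots, p_{\sigma(k)}$ does in fact appear there. Hence the union $T_{i+1}$ of future reset sets equals $R_j$ whenever $q_i$ lies inside $\gamma_j$, and it equals $R_{j+1}$ whenever $q_i = p_{\sigma(j)}$. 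Analogously, the grand union $T_1$ is exactly $R_m$. Grouping the sum $\sum_i (\vec{z}_{q_i})_{|T_{i+1}}$ by blocks of the decomposition and applying linearity of the reset operator to the inner sum over each $\gamma_j$ yields
\[
\sum_{q_i \text{ in } \gamma_j} (\vec{z}_{q_i})_{|R_j} = \left(\sum_{p \in P} \vec{\alpha}_j(p) \cdot \vec{z}_p\right)_{|R_j},
\]
so summing over $j$ from $m$ to $k$ and adding $\vec{u}_{|R_m}$ reproduces the claimed formula.

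The main obstacle is precisely the bookkeeping in the combinatorial identification of $T_{i+1}$ with $R_j$ or $R_{j+1}$: one direction (containment in $R_j$) comes from $\gamma_j \in \{p_{\sigma(j)}, \ldots, p_{\sigma(k)}\}^*$ and item~(i) of Definition~\ref{def}; the other direction (containment $R_j \subseteq T_{i+1}$) uses the fact that each $p_{\sigma(j')}$ with $j' \ge j$ still occurs at the end of its block after position $i$, so its reset set $R_{p_{\sigma(j')}}$ is included in the union. Everything else is routine once these identifications are in hand.
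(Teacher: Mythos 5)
Your proof is correct, and it takes a somewhat different route from the paper's. The paper runs a single induction over the blocks $j \in [m,k]$ of the canonical decomposition, carrying as invariants two equations for $(\vec{u}_j)_{|R_j}$ and $(\vec{v}_j)_{|R_j}$ that are precisely the partial sums of the target formula; the block structure and the sets $R_j$ are thus baked into the induction hypothesis from the start. You instead factor the argument into two independent pieces: (1) a fully general unfolding identity for an \emph{arbitrary} production sequence, expressing the final valuation as $\vec{u}_{|T_1} + \sum_i (\vec{z}_{q_i})_{|T_{i+1}}$ where $T_{i+1}$ is the union of all reset sets occurring strictly after position $i$, proved by induction on single productions using the two reset-operator identities; and (2) a combinatorial identification $T_{i+1} = R_j$ (for positions inside $\gamma_j$) and $T_{i+1} = R_{j+1}$ (for the separator $p_{\sigma(j)}$), which is exactly where item~(i) of Definition~\ref{def} and the guaranteed occurrence of each $p_{\sigma(j')}$ as a block separator enter. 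Your version buys modularity and a reusable general formula (the unfolding lemma holds for any sequence, not just canonically decomposed ones), and it isolates cleanly the one place where the structure of generalized Parikh images is actually used; the paper's version is more direct and avoids introducing the auxiliary sets $T_j$. Both ultimately rest on the same facts: linearity of the reset operator, composition of resets as union of reset sets, and commutativity of counter updates within a block.
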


\begin{proof}
The proof of the lemma formalizes the intuition given in the
introduction of this section: in order to determine the final counter
values at the end of the run, it is sufficient to only consider the
effects after the last reset has occurred on a particular counter.

Formally, let $\gamma = \gamma_m p_{\sigma(m)} \gamma_{m+1} \dots
\gamma_{k} p_{\sigma(k)}$ be the decomposition associated to the
generalized Parikh image $(\vec{\alpha}_1,\dots, \vec{\alpha}_k,
\sigma, m)$ of $\gamma$. Moreover, let $(s_m, \vec{u}_m), \dots, (s_k,
\vec{v}_k)$ and $(t_m, \vec{v}_m), \dots, (t_k, \vec{v}_k)$ be the
configurations such that for any $i \in [m,k]$,
\[(s_i, \vec{u}_i)
\xrightarrow{\gamma_i}_{\mathcal G} (t_i, \vec{v}_i)
\xrightarrow{p_{\sigma(i)}}_{\mathcal G} (s_{i+1}, \vec{u}_{i+1}),
\]
where $(s_m, \vec{u}_m) = (s, \vec{u})$ and $(s_{k+1}, \vec{u}_{k+1})
= (t, \vec{v})$.

We prove the following statement by induction on $j \in [m,
  k]$:
\begin{align}
{\vec{u}_j}_{|R_j} &= \vec{u}_{|R_m} + \sum_{i = m}^{j-1} \left[ \left( \sum_{p\in P} \vec{\alpha}_i(p) \cdot \vec{z}_p \right)_{|R_i} + \left(\vec{z}_{p_{\sigma(i)}}\right)_{|R_{i+1}} \right] \label{firsteq}\\
{\vec{v}_j}_{|R_j} &= \vec{u}_{|R_m} + \sum_{i = m}^{j-1} \left[ \left( \sum_{p\in P} \vec{\alpha}_i(p) \cdot \vec{z}_p \right)_{|R_i} + \left(\vec{z}_{p_{\sigma(i)}}\right)_{|R_{i+1}} \right] + \left( \sum_{p\in P} \vec{\alpha}_j(p) \cdot \vec{z}_p \right)_{|R_j}.
\end{align}

\emph{Base case $j=m$:} Equation~(\ref{firsteq}) is obvious. Since only resets
on components $c \in R_m$ occur in $\gamma_m$ by definition of the
decomposition, and since addition is commutative and associative, only
the number of times each production appears is important. Hence
\begin{align*}
{\vec{v}_m}_{|R_m} - {\vec{u}_m}_{|R_m} = \left( \sum_{p\in P} |\gamma_m|_{p} \cdot {\vec{z}_p} \right)_{|R_m} = \left( \sum_{p\in P} \vec{\alpha}_m(p) \cdot {\vec{z}_p} \right)_{|R_m}.
\end{align*}

\emph{Induction step $j > m$:} The configuration $(s_j, \vec{u}_j)$ is
obtained from the configuration $(t_{j-1}, \vec{v}_{j-1})$ using the
production $p_{\sigma(j-1)}$, therefore
\begin{align*}
\vec{u}_j = ( {\vec{v}_{j-1}})_{|R_{p_{\sigma(j-1)}}} + \vec{z}_{p_{\sigma(j-1)}},
\end{align*}
which leads to
\begin{align*}
{\vec{u}_j}_{|R_j} &= (( {\vec{v}_{j-1}})_{|R_{p_{\sigma(j-1)}}} + \vec{z}_{p_{\sigma(j-1)}} )_{|R_j} \\
&= ( {\vec{v}_{j-1}})_{|R_{p_{\sigma(j-1)}} \cup R_j} + (\vec{z}_{p_{\sigma(j-1)}})_{|R_j}  \\
&= (\vec{v}_{j-1})_{|R_{j-1}} + (\vec{z}_{p_{\sigma(j-1)}})_{|R_j}  \\
&= \vec{u}_{|R_m} + \sum_{i = m}^{j-2} \left[ \left( \sum_{p\in P} \vec{\alpha}_i(p) \cdot \vec{z}_p \right)_{|R_i} + \left(\vec{z}_{p_{\sigma(i)}}\right)_{|R_{i+1}} \right] + \\
&~~~~~~~~~~~~\left( \sum_{p\in P} \vec{\alpha}_{j-1}(p) \cdot \vec{z}_p \right)_{|R_{j-1}} + {\vec{z}_{p_{\sigma(j-1)}}}_{|R_j} \\
&= \vec{u}_{|R_m} + \sum_{i = m}^{j-1} \left[ \left( \sum_{p\in P} \vec{\alpha}_i(p) \cdot \vec{z}_p \right)_{|R_i} + \left(\vec{z}_{p_{\sigma(i)}}\right)_{|R_{i+1}} \right].
\end{align*}
In a similar way, the configuration $(t_j, \vec{v}_j)$ is obtained
from the configuration $(s_{j}, \vec{u}_j)$ by applying the partial
run $\gamma_j$, which only resets counters in $R_j$. Therefore,
\begin{align*}
{\vec{v}_j}_{|R_j} = \left( \vec{u}_j + \sum_{p\in P} \vec{\alpha}_j(p) \cdot \vec{z}_p \right)_{|R_j}.
\end{align*}
%
%% Note that this equality would not be valid without the additional
%% $|R_j$ reset operator, since resets may occur anywhere in the partial
%% run $\gamma_j$ on counters $c \in R_j$.
%
The statement of the lemma now follows from taking $j=k+1$ in
Equation~(\ref{firsteq}).\qed
\end{proof}

Thus, in order to decide reachability in \ZCFGr, it suffices to find a
suitable way to reason about generalized Parikh
images. In~\cite{VSS05}, Verma et al.\ show how to construct in
polynomial time an existential Presburger formula representing the
Parikh image of the language of a context-free grammar. We
generalize this construction to generalized Parikh images of
\ZCFGr. First, let us state the result from~\cite{VSS05} using the
terminology of this paper.

\begin{proposition}{\cite[Thm.\ 4]{VSS05}}\label{lem1}
Given a \ZCFGr \ $\mathcal{G} = (N, C, P, S)$ with $|N| = n$ and $|P|
= k$, one can compute in polynomial time an existential Presburger
formula $\varphi_{\mathcal{G}}(\vec{s}, \vec{t}, \vec{\alpha})$ where
$\vec{s}$, $\vec{t}$ are $n$-tuples and $\vec{\alpha}$ is a $k$-tuple
of first-order variables such that for all $s, t$ in $N^\odot$ and
$\vec{\alpha} \in \nat^k$, the following are equivalent:
\begin{itemize}
\item $(s, t, \vec{\alpha}) \in \eval{\varphi_{\mathcal{G}}}$
\item there is a run $\gamma \in P^*$ with $\pi(\gamma) =
  \vec{\alpha}$ such that for any $\vec{u} \in \zint^C$, $(s, \vec{u})
  \xrightarrow{\gamma}_{\mathcal G} (t, \vec{v})$ for some $\vec{v}
  \in \zint^C$.
\end{itemize}
\end{proposition}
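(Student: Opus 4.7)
The proof plan is to mimic the Esparza--Verma et al.\ encoding of Parikh images of context-free grammars. Because the statement quantifies over \emph{any} initial counter value $\vec{u}$ and only asserts existence of some final $\vec{v}$, the counter updates $\vec{z}_p$ and reset sets $R_p$ of the productions play no role: what needs to be captured in existential Presburger is purely the combinatorial fact that a derivation $s \to^* t$ over the commutative monoid $N^\odot$ exists and uses each production $p \in P$ exactly $\alpha_p$ times. I would build $\varphi_{\mathcal{G}}$ as the conjunction of two families of constraints, together with auxiliary existentially quantified variables $d_X \in \nat$, one per non-terminal $X \in N$.

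The first family is a balance equation for every $X \in N$:
\[
  \vec{s}(X) + \sum_{p \in P} \alpha_p \cdot |w_p|_X \;=\; \vec{t}(X) + \sum_{p \in P,\; a_p = X} \alpha_p,
\]
which counts occurrences of $X$ that are created (initially present in $s$, or produced by some right-hand side) against those consumed (as the left-hand side of a production step, or remaining in $t$). The second family is a reachability witness: for each $X \in N$,
\[
  \Bigl(\sum_{p\,:\,a_p = X} \alpha_p > 0\Bigr) \;\rightarrow\; \Bigl(\vec{s}(X) > 0 \;\vee\; \bigvee_{p \in P,\; |w_p|_X > 0} \bigl(\alpha_p > 0 \,\wedge\, d_X > d_{a_p}\bigr)\Bigr).
\]
Thus every consumed non-terminal is either already present in $\vec{s}$ or produced by some used production whose left-hand side has strictly smaller depth. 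The implication is equivalent to the obvious existential Presburger disjunction, and the overall formula has size $O(|N|\cdot|P|)$, hence is computable in polynomial time.

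Soundness is immediate: given an actual derivation, take $\alpha_p$ to be its Parikh coefficients and $d_X$ to be the first step at which a copy of $X$ appears in the intermediate multiset. The main obstacle is completeness, namely extracting a concrete derivation from a satisfying assignment. Flow alone is insufficient, because disjoint cycles of productions involving non-terminals unreachable from $\vec{s}$ could satisfy balance without corresponding to any derivation; the depth witnesses rule this out. I would prove completeness by induction on $\sum_p \alpha_p$, arguing that at every stage the depth ordering guarantees some non-terminal currently in the multiset that has an outstanding production with $\alpha_p > 0$ available to fire, and that after firing it and decrementing the corresponding $\alpha_p$ the remaining assignment still models $\varphi_{\mathcal{G}}$ with $\vec{s}$ replaced by the updated multiset.
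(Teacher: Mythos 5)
The paper itself offers no proof of this proposition: it is imported verbatim from \cite[Thm.~4]{VSS05} (which in turn rests on Esparza's characterization of reachability in communication-free Petri nets), so there is no in-paper argument to compare against. Your reconstruction is essentially the standard proof behind that citation: you correctly observe that the counters and resets are irrelevant to this statement, reduce it to expressing the Parikh image of a commutative context-free derivation from $s$ to $t$, and encode that by the flow (marking) equations together with distance certificates $d_X$ that exclude disconnected cycles of productions; this is exactly the Esparza/Verma--Seidl--Schwentick encoding, and your formula is a correct characterization. The one place where your sketch is looser than the actual argument is the completeness induction: after firing a production $p$ and decrementing $\alpha_p$, it is not true in general that the \emph{same} assignment (in particular the same depth values $d_X$) still models $\varphi_{\mathcal{G}}$ for the updated multiset --- for instance, a symbol $X$ whose only certificate was $\vec{s}(X)>0$ may lose its last copy to the fired production, and a symbol whose only producing witness was $p$ itself loses that witness once $\alpha_p$ reaches $0$. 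The standard repair is to fire a used production whose left-hand side has minimal depth (so that it is necessarily present in the current multiset) and then to \emph{recompute} the depth values for the residual instance as shortest-path distances in the dependency graph of the remaining used productions, noting that the right-hand side of $p$ is now marked and can serve as a new source; with that adjustment your induction goes through.
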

In other words, a model $(s, t, \vec{\alpha})$ of the formula
$\varphi_{\mathcal G}$ asserts that $\vec{\alpha}$ is the Parikh image
of a valid run between the commutative words $s$ and $t$. To implement
the definition of generalized Parikh image in Presburger arithmetic,
it is now sufficient to guess the intermediate words and ``connect''
them using formulas $\varphi_{\mathcal G}$. Subsequently, whenever we
define a permutation $\sigma$ in Presburger arithmetic, we write
$\vec{\sigma}$ for the corresponding vector of first-order variables
defining the respective components of the vector representation of
$\sigma$.
\begin{lemma}
  \label{gpi}
  Let $\mathcal{G} = (N, C, P, S)$ be a \ZCFGr. There exists a
  polynomial-time computable existential Presburger formula
  $\Psi_\mathcal{G}(\vec{s}, \vec{t}, \vec{A}, \vec{\sigma}, m)$
  defining the generalized Parikh images of runs of $\mathcal{G}$ from
  $s$ to $t$.
\end{lemma}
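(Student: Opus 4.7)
The plan is to existentially guess every ingredient of a generalized Parikh image and then stitch the partial runs together using the formula $\varphi_{\mathcal G}$ from Proposition~\ref{lem1}. Writing $k = |P|$, the formula $\Psi_{\mathcal G}$ introduces existential variables for: the pivot index $m$; the $k$ components of the permutation $\vec{\sigma}$, accompanied by a short clause asserting that $\vec{\sigma}$ is a permutation of $[k]$ (bounds plus pairwise disequalities); the Parikh-image vectors $\vec{\alpha}_1, \dots, \vec{\alpha}_k$; and auxiliary commutative words $s_m, t_m, s_{m+1}, t_{m+1}, \dots, s_k, t_k, s_{k+1}$ that play the role of the configurations immediately before and after each pivot production $p_{\sigma(i)}$, together with the boundary equations $s_m = \vec{s}$ and $s_{k+1} = \vec{t}$.

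For each $i \in [m, k]$ the formula then asserts three things. First, that there exists a partial run $\gamma_i$ from $s_i$ to $t_i$ whose Parikh image equals $\vec{\alpha}_i$, expressed directly as $\varphi_{\mathcal G}(s_i, t_i, \vec{\alpha}_i)$. Second, that $\gamma_i$ uses only productions from $\{p_{\sigma(i)}, \dots, p_{\sigma(k)}\}$: this forces the component of $\vec{\alpha}_i$ counting $p_{\sigma(j)}$ to vanish whenever $j < i$ and is written, unfolding the symbolic permutation, as $\bigwedge_{j < i}\bigwedge_{\ell \in [k]}(\sigma(j) = \ell \to \vec{\alpha}_i(\ell) = 0)$. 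Third, the single-production step $(t_i, \cdot) \xrightarrow{p_{\sigma(i)}} (s_{i+1}, \cdot)$, written as the $k$-way case split $\bigvee_{\ell \in [k]}(\sigma(i) = \ell \wedge t_i \ge a_{p_\ell} \wedge s_{i+1} = t_i - a_{p_\ell} + w_{p_\ell})$, where the vectors $a_{p_\ell}, w_{p_\ell}$ are constants read off $\mathcal{G}$. For the dummy positions $i < m$ the formula just asserts $\vec{\alpha}_i = \vec{0}$, and the dependence of the structure on the variable $m$ is absorbed by guarding each per-$i$ clause with $(m \le i) \to \cdots$ (or, equivalently, by disjoining over the $k$ possible values of $m$).

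Correctness in both directions is routine: any decomposition as in Definition~\ref{def} supplies witnesses for the auxiliary variables, and any satisfying assignment yields, via Proposition~\ref{lem1}, partial runs $\gamma_i$ that compose to a run $\gamma = \gamma_m p_{\sigma(m)} \cdots \gamma_k p_{\sigma(k)}$ from $\vec{s}$ to $\vec{t}$ realising the prescribed generalized Parikh image. The formula uses $O(k)$ tuples of existentially quantified variables, $O(k^2)$ literals for the permutation and the support restrictions, and $k$ invocations of $\varphi_{\mathcal G}$, each of polynomial size, so by Proposition~\ref{lem1} the whole construction is polynomial-time computable and remains within the existential fragment. The one point that needs real care is the symbolic handling of $\sigma$: both the forbidden-productions constraint and the identity of the pivot production depend on an existentially chosen $\sigma$, so every affected clause must be locally expanded into a $k$-way case split over the possible values of $\sigma(i)$ or $\sigma(j)$. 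Done at the outer quantifier level this would blow up exponentially, but performed clause by clause it contributes only a polynomial overhead, which is precisely what keeps $\Psi_{\mathcal G}$ of polynomial size.
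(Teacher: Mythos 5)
Your proposal is correct and matches the paper's own construction essentially step for step: existentially guessing $m$, the permutation $\vec{\sigma}$ (with an $O(k^2)$ permutation constraint), the vectors $\vec{\alpha}_i$ and the intermediate words, then conjoining $k$ invocations of $\varphi_{\mathcal G}$ with the support restrictions and the pivot-production steps, each expressed via local $k$-way case splits on the symbolic value of $\sigma(i)$. The point you single out as needing care --- expanding $\sigma$-dependent clauses locally rather than globally to stay polynomial --- is exactly the device the paper uses (its $\bigwedge_{p\in P}\sigma_i = p \rightarrow \cdots$ guards), so nothing further is needed.
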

\begin{proof}
  Let $P = \{ p_1, \dots, p_k \}$. Subsequently, we identify
  productions $p_i \in P$ with their index $i$. This enables us to
  write atomic formulas such as $x = p_i$, where $x$ is a first-order
  variable. Given $p = (a_p, R_p, \vec{z}_p, w_p) \in P$, we denote by
  $a_p$ and $w_p$ the corresponding vectors from $\nat^N$ as constant
  terms in the logic, and, similarly, by $\vec{z}_p$ the corresponding constant vector from
  $\zint^k$. Remember that equalities between vectors or commutative
  words in Presburger arithmetic abbreviates the conjunction of formulas
  expressing equality of their components.

  The formula we construct has vectors of free variables $\vec{s}$ and
  $\vec{t}$ for the starting and ending non-terminal commutative
  words; $\vec{\alpha_1}, \dots, \vec{\alpha_k}$ gathered in the
  matrix of first order variables $\vec{A}$, $\vec{\sigma}=(\sigma_1,
  \dots, \sigma_k)$, and a variable $m$ that encode a generalized
  Parikh image. First, we construct a formula $\varphi_{\text{perm}}$
  asserting that $\vec{\sigma}$ is a permutation on the set $[k]$:
\begin{align*}
  \varphi_{\text{perm}}(\vec{\sigma}) \defeq
  \bigwedge_{i\in [k]}\left( 1 \le \sigma_i \le k \ \wedge
  \bigwedge\nolimits_{j \in [k]} i \neq j \rightarrow \sigma_i \neq \sigma_j \right).
\end{align*}
This formula has size $O(k^2)$ and is thus polynomial in
$|\mathcal{G}|$. Now we must ``compute'' the $k$ partial runs, but first
we have to ``guess'' the starting and ending words of $N^\odot$ of
each of these partial runs, in order to use the formula from
Lemma~\ref{lem1}. Let $\vec{S} =(\vec{s}_1,\ldots,\vec{s}_k)$ and
$\vec{T}=(\vec{t}_1,\ldots,\vec{t}_k)$ and define
\begin{multline*}
  \varphi_{\text{words}}(\vec{s}, \vec{t}, \vec{\sigma}, m, \vec{S}, \vec{T}) \defeq
    \vec{s}_1 = \vec{s} \ \wedge \\ \wedge
  \bigwedge_{p\in P} (\sigma_k = p \rightarrow \vec{t}_k(a_p) > 0 \wedge \vec{t} = \vec{t}_k - a_p + w_p) \ 
    \wedge \\ \wedge
  \bigwedge_{1 \le i < k}  \bigg[(i < m \rightarrow \vec{s}_i = \vec{t}_i \wedge 
    \vec{t}_i= \vec{s}_{i+1}) \ \wedge  \\ \wedge
  \bigg(m \le i \rightarrow \bigg(\bigwedge_{p \in P} \sigma_i = p \rightarrow \vec{t}_i(a_p) > 0 
    \wedge \vec{s}_{i+1} = \vec{t}_i - a_p + w_p\bigg)\bigg)\bigg].
\end{multline*}
Here, $m$ is used as in Definition~\ref{def},
and $\vec{t}_k(a_p)$ and $\vec{t}_i(a_p)$ denote components of $\vec{t}_k$
and $\vec{t}_i$, respectively, whose index coincides with the index of
the only non-zero entry in the constant vector~$a_p \in \nat^N$.
The two first lines
enforce that the run is going from $\vec{s}$ to $\vec{t}$. The third
line imposes $\vec{s}_m = \vec{s}$, and the last one ensures that the
production $p_{\sigma_i}$ can be applied from $\vec{t}_i$ and reaches
$\vec{s}_{i+1}$. We can now express that the $k$ partial runs have Parikh
images $\vec{\alpha}_i$ and are connecting $\vec{s}_i$ with
$\vec{t}_i$, and that the production $p_{\sigma_i}$ is not occurring
afterwards in the decomposition.
\begin{multline*}
   \varphi_\text{runs}(\vec{\sigma}, m, \vec{A}, \vec{S}, \vec{T}) \defeq
   \bigwedge_{ i\in [k]} \bigg[ (i < m \rightarrow \vec{\alpha}_i = \vec{0}) \wedge
\\ \wedge \left(m \le i \rightarrow \left( \varphi_{\mathcal{G}}(\vec{s}_i,
   \vec{t}_i, \vec{\alpha}_i) \wedge
   \bigwedge_{1 \le j < i} \bigwedge_{p \in P} p = \sigma_j \rightarrow \vec{\alpha}_i(p) = 0 \right) \right)\Bigg].
\end{multline*}
In summary, $\varphi_{\text{perm}}$, $\varphi_{\text{words}}$ and
$\varphi_{\text{runs}}$ enforce the constraints from
Definition~\ref{def}. Putting everything together yields:
\begin{multline*}
  \Psi_\mathcal{G}(\vec{s}, \vec{t}, \vec{A}, \vec{\sigma}, m) \defeq \exists \vec{S},\vec{T}.\, \\
  1 \le m \le k \wedge
  \varphi_\text{perm}(\vec{\sigma}) \wedge
  \varphi_\text{words}(\vec{s}, \vec{t}, \vec{\sigma}, m, \vec{S}, \vec{T}) \wedge
  \varphi_\text{runs}(\vec{\sigma}, m, \vec{A}, \vec{S}, \vec{T}). 
\end{multline*}
Note that the size of $\Psi_\mathcal{G}(\vec{s}, \vec{t}, \vec{A},
\vec{\sigma}, m)$ is polynomial in $|\mathcal{G}|$. \qed
\end{proof}
By combining $\Psi_\mathcal{G}$ with Lemma~\ref{lem:parikh-to-effect},
we obtain the main theorem of this section. Subsequently, $\vec{u}$
and $\vec{v}$ are interpreted as vectors over the integers
(and not over the naturals); the details are as discussed
previously in Section~\ref{ssec:presburger}.

%for which a proof sketch is given in the appendix.
%
%% NOTE: mistake in original article: do not choose $p$ and $q$ in $Q$ if they are FV of $\Phi$

\begin{theorem}
  \label{cor:reachability-presburger}
  Let ${\mathcal G}$ be a \ZCFGr. There exists a polynomial-time
  computable existential Presburger formula 
  $\Phi_{\mathcal{G}}(\vec{s},\vec{t},\vec{u},\vec{v},\vec{A},\vec{\sigma},
  m)$ such that for all $s, t$ in $N^\odot$, $\vec{u}, \vec{v} \in \zint^C$ and
  $(\vec{A}, \sigma, m) \in (\nat^k)^k \times \nat^k \times \nat$
  the following are equivalent:
  \begin{itemize}
  \item $(s,t,\vec{u},\vec{v},\vec{A}, \sigma, m)\in
    \eval{\Phi_{\mathcal{G}}}$,
  \item there is $\gamma\in P^+$ such that $(s, \vec{u})
    \xrightarrow{\gamma}_\mathcal{G}(t, \vec{v})$ and $(\vec{A},
    \sigma, m)$ is a generalized Parikh image of $\gamma$.
  \end{itemize}
  In particular, reachability and coverability in \ZCFGr\ are
  \NP-complete.
\end{theorem}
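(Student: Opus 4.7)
The plan is to take $\Phi_{\mathcal{G}}$ to be the conjunction of two ingredients: the formula $\Psi_{\mathcal{G}}$ of Lemma~\ref{gpi}, which asserts that $(\vec{A},\vec{\sigma},m)$ is a generalised Parikh image of some run $\gamma \in P^+$ from $s$ to $t$; and an additional existential Presburger formula $\varphi_{\text{eff}}(\vec{u},\vec{v},\vec{A},\vec{\sigma},m)$ expressing the arithmetic identity of Lemma~\ref{lem:parikh-to-effect},
\[
\vec{v} = \vec{u}_{|R_m} + \sum_{i=m}^{k} \left[ \left( \sum_{p \in P} \vec{\alpha}_i(p) \cdot \vec{z}_p \right)_{|R_i} + (\vec{z}_{p_{\sigma(i)}})_{|R_{i+1}} \right].
\]
Since the right-hand side depends only on $\vec{u}$ and $(\vec{A},\sigma,m)$, combining Lemmas~\ref{gpi} and~\ref{lem:parikh-to-effect} yields both directions of the equivalence: if $\Phi_{\mathcal{G}}$ is satisfied then $\Psi_{\mathcal{G}}$ supplies a run $\gamma$ with the prescribed generalised Parikh image, and Lemma~\ref{lem:parikh-to-effect} guarantees that this $\gamma$ actually transforms $\vec{u}$ into $\vec{v}$; conversely, any such run $\gamma$ causes both $\Psi_{\mathcal{G}}$ and the displayed identity to hold.

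The main obstacle is expressing the reset operator $|_{R_i}$ inside existential Presburger arithmetic, because the set $R_i = R_{p_{\sigma(i)}} \cup \cdots \cup R_{p_{\sigma(k)}}$ depends on the permutation $\vec{\sigma}$, which is itself a vector of free variables. The idea is to introduce, for every $i \in [1,k+1]$ and every counter $c \in C$, a Boolean auxiliary variable $r_{i,c}$ with intended meaning ``$c \in R_i$.'' These are linked by the quantifier-free constraints $r_{k+1,c} = 0$ and, for $i \in [1,k]$,
\[
r_{i,c} = 1 \ \leftrightarrow \ \left( r_{i+1,c} = 1 \ \vee \ \bigvee_{p \in P,\ c \in R_p} \sigma_i = p \right).
\]
A reset ``$\vec{y} = \vec{x}_{|R_i}$'' is then encoded componentwise by the small case split $(r_{i,c}=0 \wedge y_c = \vec{x}(c)) \vee (r_{i,c}=1 \wedge y_c = 0)$. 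Applying this device to the three reset occurrences in the identity above, while introducing fresh variables for the intermediate partial sums, yields a $\varphi_{\text{eff}}$ of size polynomial in $|\mathcal{G}|$ that remains purely existential.

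Once $\Phi_{\mathcal{G}} \defeq \Psi_{\mathcal{G}} \wedge \varphi_{\text{eff}}$ has been assembled, the complexity claim is immediate. To decide reachability from $(s,\vec{u})$ to $(t,\vec{v})$ in a \ZCFGr\ $\mathcal{G}$, I would test either the trivial case $(s,\vec{u}) = (t,\vec{v})$, which witnesses the empty run, or the satisfiability of $\exists \vec{A}\, \vec{\sigma}\, m.\ \Phi_{\mathcal{G}}(s,t,\vec{u},\vec{v},\vec{A},\vec{\sigma},m)$; the latter is an existential Presburger sentence of polynomial size, so Proposition~\ref{prop:pa-complexity} places the problem in \NP. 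The matching lower bound has already been recorded in Section~\ref{ssec:zcfcgr} for the sub-class \ZVAS, via the reduction from the feasibility of systems $A\vec{x} = \vec{u}$, $\vec{x} \ge \vec{0}$. Finally, \NP-completeness of coverability follows by the logarithmic-space interreducibility of reachability and coverability established in Theorem~\ref{equiv}.
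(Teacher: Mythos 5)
Your proposal is correct and follows essentially the same route as the paper: conjoin $\Psi_{\mathcal{G}}$ from Lemma~\ref{gpi} with an existential encoding of the identity of Lemma~\ref{lem:parikh-to-effect}, eliminating the non-linear reset coefficients by case splits on $\vec{\sigma}$ and accumulating the sum in fresh intermediate variables (the paper's $\beta_j^c$ and $\delta_j^c$). The only detail you leave implicit is that, besides the guard ``$c\in R_i$'' captured by your $r_{i,c}$, you also need the guard ``$i\ge m$'' to suppress the constant contributions $(\vec{z}_{p_{\sigma(i)}})_{|R_{i+1}}$ for $i<m$ (the paper folds both conditions into its coefficients $\lambda_{i,c}$), but this is just the same case-split device applied once more.
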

\begin{proof}
  Thanks to the characterization of generalized Parikh images via
  $\Psi_{\mathcal{G}}$ obtained from Lemma~\ref{gpi}, it suffices to
  show that the equation obtained in Lemma~\ref{lem:parikh-to-effect}
  can be encoded in Presburger arithmetic. For any $c\in C$, this
  equation can be rewritten as follows:
  \begin{align*}
    \vec{v}(c) =& \vec{u}_{|R_m}(c) + \sum_{i=m}^k \left[ \left( \sum_{p\in P} \vec{\alpha}_i(p) \cdot \vec{z}_p \right)_{|R_i}(c) + 
      (\vec{z}_{p_{\sigma(i)}})_{|R_{i+1}}(c) \right] \\
    =& \lambda_{m,c}\cdot\vec{u}(c) + \sum_{i=1}^k \left[ \lambda_{i,c} \cdot \left( \sum_{p\in P}\vec{\alpha}_i(p) \cdot \vec{z}_p(c) \right) + 
      \lambda_{i+1,c} \cdot \vec{z}_{p_{\sigma(i)}}(c) \right]
  \end{align*}
  where
  \begin{align*}
    \lambda_{i,c} & =
    \left\{ \begin{array}{ll} 0 &\text{ if } c\in R_i \text{ or } i < m \\ 
      1 &\text{ otherwise}.
    \end{array} \right.
  \end{align*}
  Although it is easy to define $\lambda_{i,c}$ in Presburger
  arithmetic, the above equality is not a syntactically correct
  Presburger formula since the terms $\lambda_{m,c}\cdot\vec{u}(c)$
  and $\lambda_{i,c} \cdot \vec{\alpha}_i(p)$
  are not linear. To work around this problem, we therefore introduce
  intermediate variables $\beta_j^c$ and $\delta_j^c$ that enable us
  to handle the effect of resets in a step-wise fashion. Informally,
  we want these variables to satisfy the following conditions:
  \begin{align*}
    \beta_j^c = \lambda_{m,c}\cdot\vec{u}(c) + \sum_{i=1}^j \left[ \lambda_{i,c}\cdot  \left( \sum_{p\in P}\vec{\alpha}_i(p) \cdot \vec{z}_p(c) \right) + \lambda_{i+1,c} \cdot \vec{z}_{p_{\sigma(i)}}(c) \right]
\end{align*} 
for $j \in [0,k]$ and $c\in C$, and
\begin{multline*}
  \delta_j^c = \lambda_{m,c}\cdot\vec{u}(c) + \sum_{i=1}^{j-1}
   \left[ \lambda_{i,c}\cdot \left( \sum_{p\in P}\vec{\alpha}_i(p)
    \cdot \vec{z}_p(c) \right) + \lambda_{i+1,c} \cdot \vec{z}_{p_{\sigma(i)}}(c) \right]+\\ +
  \lambda_{j,c}\cdot\sum_{p\in P}\vec{\alpha}_j(p) \cdot \vec{z}_p(c)
\end{multline*}
for $j \in [1,k]$ and $c\in C$.
This approach is formalized in the formula $\varphi_\text{counters}$
below. First, remember that $R_i \defeq R_{p_{\sigma(i)}} \cup \dots
\cup R_{p_{\sigma(k)}}$, and, therefore, for any $c \in C$:
\begin{align*}
  c \in R_i &\iff c \in R_{p_{\sigma(i)}} \cup \dots \cup R_{p_{\sigma(k)}} \\
  &\iff \bigvee_{i \le j \le k} c \in R_{p_{\sigma(j)}} \\
  &\iff \bigvee_{i \le j \le k} \bigvee_{d\in R_{p_{\sigma(j)}}} d = c.
\end{align*}
We therefore introduce the notation $c \in R_x$, where $x$ can be a first-order
variable, as an abbreviation for the following formula:
\begin{align*}
\bigvee_{j=1}^k ( j \ge x ) \wedge
        (\bigwedge_{\ell = 1}^k (\ell = \sigma_j
        \rightarrow \bigvee_{d \in R_{p_\ell}} d = c)).
\end{align*}
Note that formulas of the form $\bigwedge_{\ell=1}^k \ell = \sigma_j
\rightarrow \dots$ are used when we need to use $\sigma_j$ as an index
(which would not be correct since $\sigma_j$ is a first-order
variable).
To improve readability, we also write $\lambda_{i,c} = 0$ to denote the formula
$c \in R_i \vee i < m$, and $\lambda_{i,c} = 1$
to denote $c \notin R_i \wedge i \ge m$.
Now, $\varphi_\text{counters}$ can be defined as follows:
\begin{multline*}
  \varphi_\text{counters} (\vec{s}, \vec{t}, \vec{u}, \vec{v}, \vec{A}, \vec{\sigma}, m) \defeq 
    \exists \vec{B}. \exists \vec{D}. \\
  \bigwedge_{c \in C} \bigg\{ (\lambda_{m,c} = 0 \rightarrow \beta_0^c = 0) 
    \wedge (\lambda_{m,c} = 1 \rightarrow \beta_0^c = \vec{u}(c)) \ \wedge \\
    \wedge \ \vec{v}(c) = \beta_k^c \ \wedge \\
    \wedge \ \bigwedge_{j=1}^k \bigg[ 
    (\lambda_{j,c} = 0 \rightarrow \delta_j^c = \beta_{j-1}^c) \wedge \\
    \wedge (\lambda_{j,c} = 1 \rightarrow \delta_j^c = \beta_{j-1}^c 
      +  \sum_{p\in P}\vec{\alpha}_j(p) \cdot \vec{z}_p(c)) \ \wedge \\
    \wedge (\lambda_{j+1,c} = 0 \rightarrow \beta_j^c = \delta_j^c) \\
    \wedge \bigg(\lambda_{j+1,c} = 1 \rightarrow \bigwedge_{\ell=1}^k (\ell = \sigma_j \rightarrow \beta_j^c = 
      \delta_j^c + z_{p_\ell}(c))\bigg) \bigg]\bigg\}.
  \end{multline*}
In this formula, the first line deals with $\beta_0^c$: it is
either $0$ or $\vec{u}(c)$ depending on whether $c \in R_m$, i.e.,
whether $c$ is reset at some point in the run. The second line gives
the desired value to $\vec{v}$. The four last lines compute
$\beta_j^c$ (respectively $\delta_j^c$) from $\delta_j^c$
(respectively $\beta_{j-1}^c$): if $\lambda_{j,c} = 0$ then nothing is added
to the sum.

Since satisfiability in existential Presburger arithmetic is
\NP-complete, this allows us to conclude that reachability in
\ZCFGr\ is in \NP\ and hence \NP-complete. By Theorem~\ref{equiv}, the
same result carries over to coverability in \ZCFGr. \qed
\end{proof}
Finally, we obtain as a corollary an existential Presburger formula
that defines the reachability set of a \ZCFGr\ that we will use in the
next section.
\begin{corollary}
 \label{cor:rs}
 Let $\mathcal{G}$ be a \ZCFGr. There exists a polynomial-time
 computable existential Presburger formula $\Phi^{\mathcal
   G}_\mathrm{rs}(\vec{u}, \vec{v})$ such that 
\[
(\vec{u}, \vec{v}) \in \eval{\Phi^{\mathcal G}_\mathrm{rs}} \iff \vec{v} \in \mathit{reach}(\mathcal{G},\vec{u}).
\]
\end{corollary}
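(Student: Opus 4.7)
The plan is to derive $\Phi^{\mathcal G}_{\mathrm{rs}}$ directly from the formula $\Phi_{\mathcal{G}}(\vec{s},\vec{t},\vec{u},\vec{v},\vec{A},\vec{\sigma},m)$ produced by Theorem~\ref{cor:reachability-presburger} by existentially projecting away all the auxiliary information and fixing the starting non-terminal word to the axiom. Concretely, recall that $\vec{v}\in \mathit{reach}(\mathcal{G},\vec{u})$ iff there exist $t\in N^\odot$ and a run $\gamma \in P^*$ with $(S,\vec{u})\xrightarrow{\gamma}_{\mathcal{G}} (t,\vec{v})$. View the axiom $S$ as the constant vector in $N^\odot$ with a single $1$ in the component indexed by $S$ and zeros elsewhere, so it can appear as a constant term in a Presburger formula.

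My first step is to set $\vec{s}$ to this constant vector and existentially quantify over $\vec{t}$, over the matrix $\vec{A}$ of generalized Parikh components, over the permutation vector $\vec{\sigma}$, and over the index $m$. This captures exactly the configurations reachable by nonempty runs from $(S,\vec{u})$. The only case this misses is the empty run, which always yields $\vec{v}=\vec{u}$; Theorem~\ref{cor:reachability-presburger} is explicitly restricted to $\gamma \in P^+$, so I must add this case by hand. The final formula is therefore
\[
\Phi^{\mathcal G}_{\mathrm{rs}}(\vec{u},\vec{v}) \defeq (\vec{u}=\vec{v}) \;\vee\; \exists\vec{t}\,\exists\vec{A}\,\exists\vec{\sigma}\,\exists m.\; \Phi_{\mathcal{G}}(S,\vec{t},\vec{u},\vec{v},\vec{A},\vec{\sigma},m).
\]

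Correctness is immediate from Theorem~\ref{cor:reachability-presburger}: any model of the right disjunct witnesses a nonempty run from $(S,\vec{u})$ to some $(t,\vec{v})$, while the left disjunct witnesses the empty run; conversely any element of $\mathit{reach}(\mathcal{G},\vec{u})$ is realized by either the empty run or some $\gamma \in P^+$, whose generalized Parikh image (which exists by Definition~\ref{def}) provides a satisfying assignment. The existential $\Sigma_1$ fragment is closed under disjunction and existential quantification, and since $\Phi_{\mathcal{G}}$ is computable in polynomial time by Theorem~\ref{cor:reachability-presburger}, the combined formula $\Phi^{\mathcal G}_{\mathrm{rs}}$ is also polynomial-time computable and existential. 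There is no substantive obstacle here: the work was already done in Theorem~\ref{cor:reachability-presburger}, and the corollary is obtained by simple projection together with a disjunctive patch for the empty run.
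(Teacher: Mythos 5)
Your proposal is correct and matches the argument the paper intends (the corollary is stated without an explicit proof, being an immediate consequence of Theorem~\ref{cor:reachability-presburger}): you existentially project away $\vec{t}$, $\vec{A}$, $\vec{\sigma}$ and $m$, fix $\vec{s}$ to the axiom, and you rightly notice that the theorem covers only $\gamma\in P^+$, so the disjunct $\vec{u}=\vec{v}$ for the empty run is needed. Nothing further is required.
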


\section{Inclusion for \ZCFGr}
\label{sec:inclusion}

In this section, we study inclusion problems for \ZCFGr\ and
subclasses thereof. We first remark that the general problem is
\coNEXP-complete. Subsequently, we show that the inclusion problem is
\ComplexityFont{\Pi_2^\P}-complete, even for the smallest subclass
\ZVAS. The proof of the lower bound requires us to develop a new
\ComplexityFont{\Pi_2^\P}-complete variant of the classic
\textsc{Subset Sum} problem, which we believe is a contribution of
independent interest.

\subsection{The General Case}
In this section, we show the following theorem.

\begin{theorem}\label{thm:inclusion-general}
  The inclusion problem for \ZCFGr\ is \coNEXP-complete.
\end{theorem}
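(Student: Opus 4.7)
The plan is to prove matching \coNEXP\ upper and lower bounds.

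\emph{Upper bound.} By Corollary~\ref{cor:rs}, we obtain in polynomial time existential Presburger formulas $\Phi^{\mathcal{G}}_\mathrm{rs}(\vec{u}, \vec{w})$ and $\Phi^{\mathcal{H}}_\mathrm{rs}(\vec{v}, \vec{w})$ defining the two reachability sets. The inclusion $\mathit{reach}(\mathcal{G}, \vec{u}) \subseteq \mathit{reach}(\mathcal{H}, \vec{v})$ is equivalent to the validity of the polynomial-size $\Pi_2$ Presburger sentence
\[
  \forall \vec{w}.\; \Phi^{\mathcal{G}}_\mathrm{rs}(\vec{u}, \vec{w}) \;\to\; \Phi^{\mathcal{H}}_\mathrm{rs}(\vec{v}, \vec{w}),
\]
which lies in \coNEXP\ by Proposition~\ref{prop:pa-complexity}.

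\emph{Lower bound.} We reduce from validity of $\Pi_2$ Presburger sentences, which is \coNEXP-hard by Proposition~\ref{prop:pa-complexity}. Given $\Phi = \forall \vec{x}\, \exists \vec{y}\, \varphi(\vec{x}, \vec{y})$ with $\varphi$ quantifier-free, we construct in polynomial time \ZCFGr\ $\mathcal{G}, \mathcal{H}$ over a common counter set and starting configurations $\vec{u}, \vec{v}$ so that $\mathit{reach}(\mathcal{G}, \vec{u}) \subseteq \mathit{reach}(\mathcal{H}, \vec{v})$ iff $\Phi$ is valid. The grammar $\mathcal{G}$ is a simple ``universal'' generator whose reachability set covers all of $\zint^{|\vec{x}|}$ on a designated ``output'' slice of the counter space, while $\mathcal{H}$ is built so that its reachability set, on the same slice, equals $\{\vec{x} : \exists \vec{y}.\, \varphi(\vec{x}, \vec{y})\}$. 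Inclusion then expresses ``every $\vec{x}$ admits a witness $\vec{y}$'', which is precisely $\Phi$.

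Constructing $\mathcal{H}$ is the core technical step: it amounts to a polynomial-size realization of existential Presburger formulas as \ZCFGr\ reachability sets, a converse to Corollary~\ref{cor:rs} that proceeds by structural induction on $\varphi$. Atomic linear equations are encoded via transitions that generate the free variables while updating ``slack'' counters tracking the residual error of the atom; conjunctions $\varphi_1 \wedge \varphi_2$ are handled by multiset spawning of non-terminals, $S_{\varphi_1 \wedge \varphi_2} \to S_{\varphi_1} + S_{\varphi_2}$, letting the two sub-derivations act concurrently on the shared counters; disjunctions $\varphi_1 \vee \varphi_2$ by non-deterministic branching $S_{\varphi_1 \vee \varphi_2} \to S_{\varphi_i}$; and the existentially quantified witnesses $\vec{y}$ are introduced as counters that are subsequently zeroed by reset productions, so that only the \emph{existence}, not the specific value, of a witness is observable in the reachability set. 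Negations are pushed down to atoms using De Morgan together with $\vec{a}\cdot\vec{v} \neq c \equiv \vec{a}\cdot\vec{v} \leq c-1 \vee \vec{a}\cdot\vec{v} \geq c+1$, thereby avoiding exponential blow-up. The main obstacle is to ensure that the reachability set of $\mathcal{H}$ matches the intended set \emph{exactly} on the output slice, despite the fact that intermediate configurations along derivations also belong to the reachability set; this is addressed by a ``phase flag'' counter that separates the (shared) initial configurations of $\mathcal{G}$ and $\mathcal{H}$ from their working configurations, combined with the reset productions that hide witness values, so that only meaningful configurations are compared by the inclusion.
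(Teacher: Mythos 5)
Your upper bound is exactly the paper's argument: invoke Corollary~\ref{cor:rs}, phrase inclusion as a polynomial-size $\Pi_2$-PA sentence, and apply Proposition~\ref{prop:pa-complexity}. That part is fine.

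The lower bound is where the proposal diverges and where it breaks. The paper does not reprove hardness at all: it imports it from~\cite{HH14,HH15}, where a dedicated construction shows \coNEXP-hardness already for \ZVASS\ with unary, non-negative updates. You instead reduce from $\Pi_2$-PA validity, and the entire weight of that reduction rests on the claim that every existential Presburger formula $\exists\vec{y}.\,\varphi(\vec{x},\vec{y})$ can be realized, in polynomial size, as (a slice of) a \ZCFGr\ reachability set --- a converse to Corollary~\ref{cor:rs}. This is precisely the hard part, it is asserted rather than proved, and the sketch you give for it does not work as stated. Concretely: (a)~handling $\varphi_1\wedge\varphi_2$ by spawning $S_{\varphi_1}+S_{\varphi_2}$ over \emph{shared} counters does not force the two sub-derivations to evaluate their shared free variables at the same values --- each branch contributes independently to the common counters, so you certify $\varphi_1$ and $\varphi_2$ at different points, not their conjunction at one point. (b)~Resets in a \ZCFGr\ unconditionally zero a counter without testing it; they destroy information but cannot \emph{check} anything. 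If you hide the witness counters $\vec{y}$ and the slack counters by resetting them, every $\vec{x}$ lands on the output slice whether or not a witness exists; if instead you leave a slack counter un-reset so that reaching an exact value certifies the atom, that counter is part of the compared space and must also be matched by $\mathcal{G}$, which reintroduces the problem you were trying to hide. (c)~In a commutative grammar there is no way to schedule a reset ``at the end'': productions of coexisting non-terminals interleave freely, so a reset followed by further generating productions re-pollutes the counter, and the decomposition of Lemma~\ref{lem:parikh-to-effect} shows the final value depends only on what happens \emph{after} the last reset --- exactly the quantity your construction does not control. Finally, $\mathit{reach}(\mathcal{G},\vec{u})$ contains the counter values of \emph{every} intermediate configuration (the definition quantifies over all $t\in N^\odot$ and $\rightarrow^*$ includes all prefixes), so a single phase-flag counter does not by itself prevent spurious points on the output slice. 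Unless you can actually prove the polynomial-size realization lemma --- and the fact that~\cite{HH15} needed a bespoke construction strongly suggests it is not available off the shelf --- the hardness direction should instead be obtained by citation, as the paper does, or by reproducing the construction of~\cite{HH15}.
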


In the conference version of this paper~\cite{HH14}, we showed that
inclusion is \coNEXP-hard for \ZVASS, even when numbers are encoded in
unary. Our construction was subsequently strengthened in~\cite{HH15}
were it was shown that inclusion is already \coNEXP-hard for
\ZVASS\ when counter updates are restricted to be non-negative and
given in unary. The \coNEXP-lower bound of
Theorem~\ref{thm:inclusion-general} consequently follows
from~\cite{HH15}.

Thanks to our characterization of reachability sets of \ZCFGr\ via
existential Presburger formulas of polynomial size obtained from
Corollary~\ref{cor:rs}, a matching upper bound is also not difficult
to obtain. Let $\mathcal{G}$ and $\mathcal{H}$ be \ZCFGr,
$\vec{u},\vec{v}\in \zint^C$, and let
$\Phi^\mathcal{G}_{\text{rs}}(\vec{x}, \vec{z})$ and
$\Phi^\mathcal{H}_{\text{rs}}(\vec{y}, \vec{z})$ be the formulas from
Corollary~\ref{cor:rs}. We then have that
\begin{align*}
& \mathit{reach}(\mathcal{G}, \vec{u})\subseteq
\mathit{reach}(\mathcal{H}, \vec{v})\\  
\iff & \psi \defeq \neg (\exists \vec{z}.
\Phi^{\mathcal{G}}_\text{rs}(\vec{u}/\vec{x},\vec{z}) \wedge \neg 
(\Phi^{\mathcal{H}}_\text{rs}(\vec{v}/\vec{y},\vec{z})))~\text{is valid.}
\end{align*}
Bringing $\psi$ into prenex normal form yields a $\Pi_2$-PA sentence
for which validity can be decided in $\coNEXP$,
cf.\ Proposition~\ref{prop:pa-complexity}. This concludes the proof
of Theorem~\ref{thm:inclusion-general}.

\subsection{Inclusion for \ZVAS}

In this section, we show that already for \ZVAS, the inclusion problem
is computationally difficult.

\begin{theorem}\label{thm:inclusion-zvas}
  The inclusion problem for \ZVAS\ is
  $\ComplexityFont{\Pi_2^\P}$-complete.
\end{theorem}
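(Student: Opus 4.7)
The plan is to establish $\ComplexityFont{\Pi_2^\P}$-completeness by proving upper and lower bounds separately. For the upper bound, I would first observe that a \ZVAS\ $\mathcal{A} = (\{S\}, C, P, S)$ is fully described by its integer update matrix $A \in \zint^{d \times k}$ whose columns are the vectors $\vec{z}_p$ for $p \in P$, and its reachability set from $\vec{u}$ is precisely the linear set $\vec{u} + A \cdot \nat^k$. Consequently, inclusion $\mathit{reach}({\mathcal G}, \vec{u}) \subseteq \mathit{reach}({\mathcal H}, \vec{v})$ is equivalent to the $\forall^*\exists^*$-statement
\[
  \forall \vec{x} \in \nat^m.\ \exists \vec{y} \in \nat^n.\ A\vec{x} - B\vec{y} = \vec{v} - \vec{u}.
\]
To place this in $\ComplexityFont{\Pi_2^\P}$, it suffices to place its negation in $\ComplexityFont{\Sigma_2^\P}$. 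The inner non-existence of a non-negative integer solution $\vec{y}$ of $B\vec{y} = A\vec{x} - (\vec{v}-\vec{u})$ is in \coNP\ by Papadimitriou's bound on integer programming, so the task reduces to establishing a small-witness lemma: if any $\vec{x} \in \nat^m$ witnesses the negation, then one exists of bit-length polynomial in the input. I would derive this from the semilinear structure of the ``bad set'' $\{\vec{x} \in \nat^m : A\vec{x} - (\vec{v}-\vec{u}) \notin B \cdot \nat^n\}$ combined with Pottier/Domenjoud-style bounds on representatives of Presburger-definable sets given by polynomial-size existential formulas.

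For the lower bound, I would introduce a new $\ComplexityFont{\Pi_2^\P}$-complete variant of \textsc{Subset Sum}, as the authors flag as an independent contribution. A natural formulation is: given positive integers $a_1, \ldots, a_m, b_1, \ldots, b_n, t$ in binary, decide whether for every subset $I \subseteq [m]$ (possibly subject to an auxiliary sum constraint) there exists a subset $J \subseteq [n]$ with $\sum_{i \in I} a_i + \sum_{j \in J} b_j = t$. I would show $\ComplexityFont{\Pi_2^\P}$-hardness of this variant by reduction from $\forall\exists$-\textsc{3-SAT}, following the template of the standard \NP-hardness proof of \textsc{Subset Sum} but using binary digits of the $a_i$ and $b_j$ to encode both the universally and existentially quantified Boolean blocks. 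Given such a subset-sum variant, the reduction to \ZVAS\ inclusion is essentially immediate: the vectors $a_i$ (respectively $b_j$) become the columns of the update matrices of ${\mathcal G}$ (respectively ${\mathcal H}$), the target $t$ is encoded in the starting vectors $\vec{u}, \vec{v}$, and the $\{0,1\}$-selection of subsets is enforced by the standard gadget $x_i + x_i' = 1$ with $x_i, x_i' \geq 0$ using auxiliary counters, which is compatible with the equality-only semantics of \ZVAS.

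The main obstacle will be establishing $\ComplexityFont{\Pi_2^\P}$-hardness of the new subset-sum variant. Classical \textsc{Subset Sum} is only \NP-complete, and lifting it to $\ComplexityFont{\Pi_2^\P}$ demands that subsets $I \subseteq [m]$ correspond bijectively to the universally quantified Boolean assignments and subsets $J \subseteq [n]$ to the existentially quantified ones, while all arithmetic and encodings remain polynomial in the input size; this is where the bulk of the combinatorial work lies. A secondary challenge is obtaining a truly \emph{polynomial} small-witness bound for the upper-bound argument, avoiding the exponential bound implied by the generic $\coNEXP$ upper bound for $\Pi_2$-Presburger, which requires making essential use of the simple single-linear-set structure of \ZVAS\ reachability sets.
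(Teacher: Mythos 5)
Your lower-bound plan has a genuine gap, and it is precisely the obstacle the paper flags before introducing its key lemma. The variant you propose is essentially $\Pi_2$-\textsc{Subset Sum} (already known to be $\ComplexityFont{\Pi_2^\P}$-complete by Berman et al., so re-proving it from $\forall\exists$-\textsc{3-SAT} is not where the work lies), and the step you call ``essentially immediate''---reducing it to \ZVAS\ inclusion---is the one that fails. The gadget $x_i + x_i' = 1$ does enforce $\{0,1\}$-selection on the \emph{existentially} quantified side, because there the grammar $\mathcal{H}$ gets to choose $\vec{y}$ and the constraint is checked against the target vector; the paper uses exactly this trick ($\vec{y}+\vec{z}=\vec{1}$). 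But on the \emph{universally} quantified side the inclusion problem quantifies over \emph{all} $\vec{x}\in\nat^m$: a \ZVAS\ has no control structure, so $\mathcal{G}$ cannot forbid runs with $x_i\ge 2$ or $x_i+x_i'\neq 1$, and for every such ``illegal'' $\vec{x}$ the inclusion must still hold. Making those $\vec{x}$ vacuously satisfiable would require $\mathcal{H}$ to recognize a disequality, which a single linear set cannot express, and absorbing the excess would require $\mathcal{H}$ to contain columns cancelling the $a_i$, which destroys the correspondence. The paper's way out is to design a different $\ComplexityFont{\Pi_2^\P}$-complete problem, \textsc{Simultaneous Subset Sum}, in which the universal quantifier ranges over a single integer $i\in[0,2^m-1]$ rather than over subsets; the subset structure is pushed into the quotient by a large modulus $h$ by shifting the universally quantified items $u_j$ to $u_j+2^{j-1}\cdot h$. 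A single integer variable then maps to a single column $(-h,\vec{0})$ of $A$ with no subset constraint needed, and the range restriction $i<2^m$ is enforced by an extra existential variable $c$ replacing $i$ with $i-2^m\cdot c$ (no $c>0$ can work when $i<2^m$, since the right-hand side would become negative). Without some such device your reduction does not go through.

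Your upper-bound sketch is compatible with the paper in spirit but also defers the hard part: the paper simply observes that \ZVAS\ reachability sets are linear sets in $\zint^d$ given by their generators and cites the $\ComplexityFont{\Pi_2^\P}$ upper bound for inclusion of such sets from~\cite{CH16} (implicitly also~\cite{Huy86}). Your plan of bounding a smallest counterexample $\vec{x}$ via the semilinear structure of the ``bad set'' is the right idea, but the polynomial witness bound is exactly the nontrivial content of that cited result; it does not follow from generic Presburger quantifier-elimination bounds, which only yield the exponential $\coNEXP$ bound you are trying to beat.
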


In fact, the lower bound already holds when numbers are encoded
in unary (see Theorem~\ref{t:unary} in the following subsection).
The starting point for
our lower bound here (for the binary encoding)
is the following generalization of \textsc{Subset Sum},
which is known to be complete for the second level of the
polynomial hierarchy.  Recall that, unless explicitly stated
otherwise, all numbers in the considered problem settings are written
in binary.
\vspace*{0.25cm}
\problemx{$\Pi_2$-Subset Sum}
{Finite sets $U,V \subseteq \mathbb{N}$ and $t \in \mathbb{N}$.}
{For every $U'\subseteq U$, does there exist a
 $V'\subseteq V$ such that $\sum U' + \sum V' = t$?}
\vspace*{0.25cm}
Here and below, for $A\subseteq \nat$ we use $\sum A$ as a shorthand
for $\sum_{a\in A} a$.
\begin{proposition}[Berman et al.~\cite{BKLPR02}]\label{prop:pi-2-subsetsum}
  \textsc{$\Pi_2$-\textsc{Subset Sum}} is
  $\ComplexityFont{\Pi_2^\P}$-complete.
\end{proposition}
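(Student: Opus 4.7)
Membership in $\ComplexityFont{\Pi_2^\P}$ is immediate: a candidate universal subset $U' \subseteq U$ is a polynomial-size bit string and so is a candidate $V' \subseteq V$, while verifying $\sum U' + \sum V' = t$ takes polynomial time even with binary-encoded numbers. Hence the problem lies in $\forall^{\P}\exists^{\P}$-time, which is $\ComplexityFont{\Pi_2^\P}$.

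\textbf{Lower bound: plan.} The plan is to reduce from the canonical $\ComplexityFont{\Pi_2^\P}$-complete problem $\forall\vec{x}\,\exists\vec{y}.\,\varphi(\vec{x},\vec{y})$, where $\varphi$ is a 3CNF formula over universal variables $x_1, \ldots, x_n$, existential variables $y_1, \ldots, y_m$, and clauses $C_1, \ldots, C_k$. The construction follows the classical 3SAT-to-\textsc{Subset Sum} pattern: all numbers are written in a fixed base $B$ large enough to block carries, and each digit position encodes an independent linear constraint. I allocate one digit position for every universal variable, one for every existential variable, and one for every clause. Into $U$ place, for each $i \in [n]$, a number $u_i$ carrying $1$ in the $i$-th universal position and $1$ in each clause position where $x_i$ occurs positively. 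Into $V$ place: (i)~for each $i \in [n]$, a number $w_i$ with $1$ in the $i$-th universal position and $1$ in each clause position containing $\neg x_i$; (ii)~for each $j \in [m]$, two numbers $v_j^+, v_j^-$ with $1$ in the $j$-th existential position and $1$s in the clause positions their literals satisfy; (iii)~for each $\ell \in [k]$, two slack numbers $d_\ell^{(1)}, d_\ell^{(2)}$ taking values $1$ and $2$ in clause position $\ell$ and $0$ elsewhere. The target $t$ has $1$ in every universal and every existential position and $4$ in every clause position. Choosing $B > 6$ is sufficient since no column sum ever exceeds $6$; the whole construction is polynomial-time computable.

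\textbf{Correctness.} Because $U = \{u_1, \ldots, u_n\}$, subsets $U' \subseteq U$ correspond bijectively to truth assignments to $\vec{x}$ via ``$x_i = \top$ iff $u_i \in U'$''. The universal-position constraint then forces a feasible $V'$ to contain exactly the $w_i$ with $u_i \notin U'$, consistently completing the assignment to $\vec{x}$. The existential-position constraint forces exactly one of $v_j^+, v_j^-$ into $V'$, producing a truth assignment to $\vec{y}$. In clause position $\ell$, the contribution of the chosen $u$'s, $w$'s, and $v^{\pm}$'s equals the number of literals of $C_\ell$ satisfied by $(\vec{x},\vec{y})$, and the two slacks can add any value in $\{0,1,2,3\}$, so the target $4$ is reachable iff $C_\ell$ has at least one satisfied literal. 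Consequently, a witnessing $V'$ exists for a given $U'$ iff the induced assignment $\vec{x}$ admits some $\vec{y}$ with $\varphi(\vec{x},\vec{y})$ true; quantifying universally over $U'$ then matches precisely the $\forall\vec{x}\exists\vec{y}$ semantics of $\varphi$.

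\textbf{Main obstacles.} Two technical points require care. First, all numbers in $U \cup V$ must be distinct, because $U$ and $V$ are \emph{sets}, not multisets; this is handled by the ``tag'' positions (the universal, existential, and the two-value slack positions) that differ across element families. Second, the base $B$ must be large enough to ensure that the base-$B$ arithmetic decomposes digit-by-digit — a constant $B$ suffices because every column sum is bounded by the constant $6$, so the binary-encoded numbers produced remain of polynomial size, and the reduction is carried out in logarithmic space.
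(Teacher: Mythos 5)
The paper does not prove this proposition at all: it is imported verbatim from Berman et al.~\cite{BKLPR02} and used as a black box as the starting point for the reduction to \textsc{Simultaneous Subset Sum}. Your proposal supplies a self-contained proof, so the comparison is necessarily between your direct argument and a citation. Your argument is correct. The membership claim is routine. For hardness, your reduction from $\forall\vec{x}\,\exists\vec{y}.\,\varphi$ with $\varphi$ in 3-CNF follows the classical digit-encoding of 3SAT into \textsc{Subset Sum}, and the one genuinely $\Pi_2$-specific design decision is made correctly: you place exactly \emph{one} number $u_i$ per universal variable into $U$ (with the complementary ``false'' witness $w_i$ living in $V$), so that \emph{every} subset $U'\subseteq U$ bijectively encodes a truth assignment to $\vec{x}$ and there are no ``malformed'' universal choices to argue away --- this is precisely the point where a naive two-numbers-per-variable encoding would break the $\forall$-quantification. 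The digit bookkeeping checks out: universal and existential tag columns force consistency and force exactly one of $v_j^+,v_j^-$, clause columns sum to the number of satisfied literals plus a slack in $\{0,1,2,3\}$ against a target of $4$, all column sums are bounded by $6$ so any base $B\geq 7$ blocks carries, and the numbers have polynomially many digits. The only loose end is the set-versus-multiset issue you raise: your tag positions separate the families $\{u_i\},\{w_i\},\{v_j^{\pm}\},\{d_\ell^{(1)},d_\ell^{(2)}\}$ from one another, but $v_j^+$ and $v_j^-$ share the same existential tag and collide exactly when $y_j$'s positive and negative occurrence sets coincide (e.g.\ $y_j$ occurs in no clause); this is handled by the standard w.l.o.g.\ normalization that every variable occurs and no clause contains complementary literals (or by observing that even in the degenerate case the merged number behaves correctly), and is worth one explicit sentence. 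With that caveat noted, your proof stands on its own and gives the reader something the paper only references.
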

There is no obvious reduction from \textsc{$\Pi_2$-\textsc{Subset Sum}}
to inclusion for \ZVAS. Informally,
the lack of control structure in \ZVAS\ makes it difficult
to encode the alternation of quantifiers (for all $U'$ there exists a $V'$)
and the subset constraints (each element of $U$, respectively $V$, participates at most
once in $U'$, respectively $V'$).
Accordingly, we define another variant of
\textsc{Subset Sum}, implicit in~\cite{CM14}.
\vspace*{0.25cm}
\problemx{Simultaneous Subset Sum}
{A finite set $W\subseteq \mathbb{N}$, and $h, 2^m, t\in \mathbb{N}$ such that $t<h$.}
{For every $i\in [0,2^m-1]$, does there exist a $W'\subseteq W$ such that
$\sum W' = t + i\cdot h$?}
\vspace*{0.25cm}
\begin{lemma}\label{lem:simultaneous-ss-complexity}
  \textsc{Simultaneous Subset Sum} is
  $\ComplexityFont{\Pi_2^\P}$-complete.
\end{lemma}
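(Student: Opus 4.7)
The plan splits into membership in $\ComplexityFont{\Pi_2^\P}$ and a matching hardness reduction. Membership is immediate: since the input encodes $2^m$ in binary, the outer universal quantifier ranges over an integer $i$ with polynomially many bits; given $i$, one existentially guesses a subset $W' \subseteq W$ as a polynomial-size certificate and verifies $\sum W' = t + i\cdot h$ by polynomial arithmetic on binary numbers.

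For $\ComplexityFont{\Pi_2^\P}$-hardness, I would reduce from $\Pi_2$-\textsc{Subset Sum} (Proposition~\ref{prop:pi-2-subsetsum}) using the natural bijection between subsets of $U$ and the integer range $[0, 2^m-1]$ given by binary expansion, where $m \defeq |U|$. Writing $U = \{u_1, \ldots, u_m\}$ and $i = \sum_{j\in [m]} b_j 2^{j-1}$ with $b_j \in \{0,1\}$, let $U'_i \defeq \{u_j : b_j = 1\}$. Then $(U, V, t)$ is a yes-instance iff for every $i\in [0,2^m-1]$ there exists $V' \subseteq V$ with $\sum V' + \sum U'_i = t$. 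I would build the target \textsc{Simultaneous Subset Sum} instance by choosing $h \defeq \sum U + \sum V + t + 1$, setting $W \defeq \{a_1, \ldots, a_m\} \cup V$ with $a_j \defeq u_j + 2^{j-1}\cdot h$, and keeping the same target offset $t$ and stride $h$. All elements of $W$ are distinct positive integers since each $a_j$ exceeds $h$ while every element of $V$ is strictly below $h$, and the whole construction is clearly polynomial-time.

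Correctness then reduces to a single carry-free decomposition. For any $W' = A' \cup V' \subseteq W$ with $A' \subseteq \{a_1,\ldots,a_m\}$, one has $\sum W' = \bigl(\sum_{j \in A'} u_j + \sum V'\bigr) + \bigl(\sum_{j \in A'} 2^{j-1}\bigr)\cdot h$; since the low part is bounded by $\sum U + \sum V < h$, this is the unique representation of $\sum W'$ as $r + q\cdot h$ with $r < h$. Because $t < h$ as well, equating with $t + i\cdot h$ forces $\sum_{j\in A'} 2^{j-1} = i$, hence $A' = U'_i$, and then the low-part equation reads $\sum U'_i + \sum V' = t$. As $i$ ranges over $[0, 2^m-1]$, the sets $U'_i$ enumerate every subset of $U$ exactly once, so the constructed instance is a yes-instance iff $(U,V,t)$ is. The only quantitative subtlety is choosing $h$ large enough that carries from the low part cannot contaminate the high part; once this bound is fixed, the correspondence between the two problems becomes tight and the reduction goes through.
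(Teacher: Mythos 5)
Your proposal is correct and follows essentially the same route as the paper's proof: a reduction from $\Pi_2$-\textsc{Subset Sum} that encodes each subset $U'\subseteq U$ as the binary expansion of the index $i$, lifts $u_j$ to $u_j+2^{j-1}\cdot h$, and uses a stride $h$ exceeding $\sum U+\sum V$ so that the low part cannot carry into the multiples of $h$. The only cosmetic difference is your choice $h=\sum U+\sum V+t+1$ (the paper uses $\sum U+\sum V+1$ and notes that $t<h$ may be assumed without loss of generality), which changes nothing essential.
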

\begin{proof}
  The problem is easily seen to be in $\ComplexityFont{\Pi_2^\P}$. To
  show hardness, let $U=\{u_1,\ldots, u_r\}, V=\{v_1,\ldots,v_s
  \}\subseteq \mathbb{N}$ and $t\in \mathbb{N}$ form an instance of
  \textsc{$\Pi_2$-\textsc{Subset Sum}}. We define the corresponding
  instance of \textsc{Simultaneous Subset Sum} as follows:
  \begin{itemize}
  \item $h\defeq \sum U + \sum V + 1$;
  \item $W \defeq \{ u_1 + h, u_2 + 2\cdot h, \ldots, u_r + 2^{r - 1}
    \cdot h, v_1, \ldots, v_s\}$;
  \item $m\defeq r$, and $t$ is unchanged.
  \end{itemize}
  We now show that this reduction is faithful. With no loss of
  generality, we may assume $t<h$, otherwise the original instance is
  clearly a no-instance. We actually show a slightly stronger
  statement: define a bijection between $T\defeq \{ t + i\cdot h
  \colon i\in [0,2^r-1]\}$ and $2^U$ as follows: associate with $t +
  i\cdot h\in T$ the set $U_i\subseteq U$ such that
\[
u_j \in U_i \iff \text{$2^{j-1}$ has non-zero coefficient in
  the binary expansion of $i$,}
\]
i.e., $U_i$ is such that $i = \sum_{u_j\in U_i} 2^{j-1}$. We claim that every $t
+ i\cdot h\in T$ can be represented as a sum of some $W'\subseteq W$
if and only if for the subset $U_i\subseteq U$ there is some
$V'\subseteq V$ such that $\sum U_i + \sum V' = t$. Indeed, observe
that
  \begin{align}
    & \sum U_i + \sum V' = t \notag\\
    \iff & \sum_{u_j\in U_i} u_j + \sum V' = t \notag\\
    \iff & \sum_{u_j\in U_i} u_j + \sum V' + \sum_{u_j\in U_i} 2^{j-1}\cdot h 
    = t + i\cdot h \notag \\
    \label{eqn:simultaneous-ss-1}
    \iff & \sum_{u_j\in U_i} (u_j + 2^{j-1} \cdot h) + \sum V' = t + i\cdot h\\
    \label{eqn:simultaneous-ss-2}
    \iff & \sum W' = t + i \cdot h \quad \text{for some } W'\subseteq W,
  \end{align}
  where the implication $\eqref{eqn:simultaneous-ss-2}\Rightarrow
  \eqref{eqn:simultaneous-ss-1}$ holds by our choice of $h$.
  \qed
\end{proof}
We now apply Lemma~\ref{lem:simultaneous-ss-complexity} in order to
obtain the lower bound for \ZVAS\ inclusion. Let $W=\{ w_1, \ldots,
w_n\} \subseteq \nat$ and $h, 2^m, t\in \nat$ define an instance of
\textsc{Simultaneous Subset Sum}. Set $\vec{w} \defeq
(w_1,\ldots,w_n)$, then this instance is a yes-instance if and only if
\begin{align}\label{eqn:simultaneous-ss-vectors}
  \text{for all } i \in [0,2^m-1] \text{ there exists a }
  \vec{y}\in \{0,1\}^n \text{ such that } \vec{w} \cdot \vec{y} = t + i\cdot h.
\end{align}
It follows from the discussion in Section~\ref{sec:def},
p.~\pageref{anc:matrix-reachability}, that the \ZVAS\ inclusion
problem can equivalently be expressed as follows: For matrices $A \in
\zint^{d \times r}$ and $B \in \zint^{d\times s}$, and some
$\vec{v}\in \zint^d$, decide whether
\begin{align}\label{eqn:zvas-inclusion-vectors}
  \text{for all } \vec{x} \in \nat^r, \text{there exists a } 
  \vec{y} \in \nat^s \text{ such that } A\cdot \vec{x} + B\cdot \vec{y} =
  \vec{v}.
\end{align}
We now transform~\eqref{eqn:simultaneous-ss-vectors} into the
form~\eqref{eqn:zvas-inclusion-vectors}, thus proving
$\ComplexityFont{\Pi_2^\P}$-hardness of \ZVAS\ inclusion. Observe
that~\eqref{eqn:simultaneous-ss-vectors} is almost of the same form
as~\eqref{eqn:zvas-inclusion-vectors}. However, the domains of the
quantified variables in~\eqref{eqn:simultaneous-ss-vectors}
and~\eqref{eqn:zvas-inclusion-vectors} disagree. In order to overcome
this issue, first we observe that the existence of some $\vec{y}\in
\{0,1\}^n$ is equivalent to the existence of $\vec{y},\vec{z}\in
\nat^n$ such that $\vec{y} + \vec{z} = \vec{1}$. Second, the
restriction of $i$ to numbers less than $2^m$ can be avoided by
introducing another existentially quantified variable $c \in \nat$ and
replacing $i$ with $i - 2^m\cdot c$
in~\eqref{eqn:simultaneous-ss-vectors}. Informally speaking, this
ensures that $i$ is evaluated only modulo $2^m$ and, effectively, does
not ``overflow''. Putting everything together, we claim
that~\eqref{eqn:simultaneous-ss-vectors} is equivalent to the
following condition:
\begin{multline}\label{eqn:simultaneous-ss-vectors-transformed}
  \text{for all } i\in \nat \text{ there exist } \vec{y},\vec{z}\in \nat^n
  \text{ and } c\in \nat \text{ such that}\\
  \vec{w} \cdot \vec{y} = t + (i - 2^m\cdot c) \cdot h \text{ and }
  \vec{y} + \vec{z} = \vec{1}.
\end{multline}
Indeed, \eqref{eqn:simultaneous-ss-vectors}
implies~\eqref{eqn:simultaneous-ss-vectors-transformed};
conversely, \eqref{eqn:simultaneous-ss-vectors-transformed}
    implies~\eqref{eqn:simultaneous-ss-vectors}, because for
$i<2^m$ no $c> 0$ can satisfy the first equation
in~\eqref{eqn:simultaneous-ss-vectors-transformed}:
the right-hand side is $t+(i-2^m \cdot c)\cdot h \le t
- h< 0$, while the left-hand side is $\vec{w}\cdot \vec{y} \ge 0$ for all $\vec{y}\in
\nat^n$. It is readily verified
that~\eqref{eqn:simultaneous-ss-vectors-transformed} is of the
form~\eqref{eqn:zvas-inclusion-vectors} with $r=1$, $s=2 \cdot n + 1$,
$d=n+1$ and
\begin{gather*}
  A \defeq \begin{pmatrix}-h \\ \vec{0} \end{pmatrix}, \quad B \defeq
  \begin{pmatrix}
    \vec{w}^\intercal & \vec{0}^\intercal & 2^m\cdot h\\
    I_n & I_n & \vec{0}
  \end{pmatrix},\quad
  \vec{v} \defeq \begin{pmatrix} t\\ \vec{1} \end{pmatrix}.
\end{gather*}
This concludes the proof of the $\ComplexityFont{\Pi_2^\P}$-hardness of
\ZVAS\ inclusion when numbers are encoded in binary.

We now turn towards a matching upper bound for \ZVAS\ inclusion. Given
a \ZVAS\ $\mathcal{G}=(\{S\},C,P,S)$ such that $P=\{ (S,\vec{v}_1,S),
\ldots, (S,\vec{v}_n,S) \}$ and a configuration $(S,\vec{v})$, we
obviously have
\[
\mathit{reach}(\mathcal{G},\vec{v}) = \left\{ \vec{v} + \sum_{1\le
  i\le n} \lambda_i \cdot \vec{v}_i : \lambda_i\in \nat, 1\le i\le n
\right\},
\]
which is a linear (and thus semi-linear) set in $\zint^d$.
It follows from the results
in~\cite{CH16} (and implicitly also from~\cite{Huy86}) that the
inclusion problem for semi-linear sets in~$\zint^d$ given by their
generators is in $\ComplexityFont{\Pi_2^\P}$ (and is, in fact,
$\ComplexityFont{\Pi_2^\P}$-complete). As a consequence, we
conclude that the inclusion problem for \ZVAS\ is also contained in
$\ComplexityFont{\Pi_2^\P}$, and hence is
$\ComplexityFont{\Pi_2^\P}$-complete.

%

%% Binary to unary encoding: If we want numbers to be encoded in unary,
%% do the following:
%% \begin{itemize}
%% \item $\forall i. \exists \vec{y} : 500 \cdot i + \cdots = \cdots
%%   \leadsto \forall i. \exists \vec{y}, j : j + \cdots = \cdots \wedge
%%   j = 500 \cdot i$
%% \item now $j$ is existential (i.e.\ on the right-hand side)
%% \item apply standard doubling transformation
%% \end{itemize}

\subsection{Inclusion for \ZVAS\ under unary encoding of integers}

It is interesting to note that, modulo standard computational complexity
assumptions, both $\Pi_2$-\textsc{Subset Sum} and
\textsc{Simultaneous Subset Sum} are only
$\ComplexityFont{\Pi_2^\P}$-hard if numbers are represented in binary:
it is folklore that \textsc{Subset Sum} has a pseudo--polynomial time
dynamic programming algorithm, and thus the unary versions
of $\Pi_2$-\textsc{Subset Sum} and \textsc{Simultaneous Subset Sum}
are not $\ComplexityFont{\Pi_2^\P}$-hard unless the polynomial hierarchy
collapses.
This phenomenon, however, does not extend to the inclusion problem for
\ZVAS.
Specifically, Theorem~\ref{t:unary} gives a stronger form of the lower
bound in Theorem~\ref{thm:inclusion-zvas}:

\begin{theorem}
\label{t:unary}
The inclusion problem for \ZVAS\ remains
$\ComplexityFont{\Pi_2^\P}$-hard even when numbers are encoded in
unary.
\end{theorem}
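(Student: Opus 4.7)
The plan is to reduce from $\forall\exists$-3SAT, a classical $\ComplexityFont{\Pi_2^\P}$-complete problem whose inputs are purely Boolean and therefore unaffected by the encoding of numbers. Given a 3CNF formula $\phi(x_1,\ldots,x_m,y_1,\ldots,y_n)$, my goal is to construct in polynomial time matrices $A$, $B$ with entries in $\{-2,-1,0,1\}$ and a target vector $\vec{v}$ with entries in $\{0,1\}$ such that $\forall \vec{x}\in\{0,1\}^m\, \exists \vec{y}\in\{0,1\}^n\colon \phi(\vec{x},\vec{y})$ holds if and only if $\forall \vec{X}\in\nat^r\, \exists \vec{Y}\in\nat^s\colon A\vec{X}+B\vec{Y}=\vec{v}$. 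By the characterisation of \ZVAS\ inclusion already used in the proof of Theorem~\ref{thm:inclusion-zvas}, the latter is precisely a \ZVAS\ inclusion instance, and because all coefficients are bounded absolute constants, it has polynomial size even under unary encoding.

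The construction uses one universal counter $X_j$ per Boolean variable $x_j$, together with three existential counters $W_j, \bar{W}_j, Z_j$ linked by the equations $X_j = 2Z_j + W_j$ and $W_j + \bar{W}_j = 1$; for each existential variable $y_j$ a complementary pair $Y_j, \bar{Y}_j$ with $Y_j + \bar{Y}_j = 1$; and for each clause $C_k$ with literals $\ell_1^k, \ell_2^k, \ell_3^k$ (each of which is one of $W_j$, $\bar{W}_j$, $Y_j$, $\bar{Y}_j$) an existential slack $s_k$ together with $\ell_1^k + \ell_2^k + \ell_3^k - s_k = 1$. The key observation is that the two equations involving $X_j$ \emph{uniquely} determine $W_j = X_j \bmod 2$ and $Z_j = \lfloor X_j/2 \rfloor$ in $\nat$, because $W_j + \bar{W}_j = 1$ forces $W_j\in\{0,1\}$. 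Hence as $\vec{X}$ ranges over $\nat^m$, the induced Boolean vector $\vec{W}$ ranges exactly over $\{0,1\}^m$, and the existential player has no freedom to tamper with it. Similarly $Y_j + \bar{Y}_j = 1$ forces $\vec{Y}\in\{0,1\}^n$, while the clause equation is solvable in $\nat$ exactly when $\ell_1^k + \ell_2^k + \ell_3^k \ge 1$. Solvability of the whole linear system for every $\vec{X} \in \nat^m$ is therefore equivalent to truth of the original $\forall\exists$-3SAT instance.

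The step requiring the most care is the forcing argument: the reduction crucially relies on the fact that widening the domain of the universal quantifier from $\{0,1\}$ to all of $\nat$ grants the existential player no extra freedom over the induced Boolean values $W_j$, and one must also check that arbitrarily large values of $X_j$ never produce a spurious satisfaction of a clause through some unintended solution of $\ell_1^k + \ell_2^k + \ell_3^k - s_k = 1$. Once this verification is completed, every entry of $A$, $B$, and $\vec{v}$ lies in $\{-2,-1,0,1\}$ and the numbers of rows and columns are linear in $|\phi|$, so the instance has size polynomial in $|\phi|$ even when integers are written in unary, yielding the claimed $\ComplexityFont{\Pi_2^\P}$-hardness under unary encoding.
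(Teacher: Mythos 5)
Your reduction is correct, but it takes a genuinely different route from the paper. The paper proves Theorem~\ref{t:unary} by reducing the \emph{binary}-encoded inclusion problem (already shown $\ComplexityFont{\Pi_2^\P}$-hard via \textsc{Simultaneous Subset Sum}) to the unary-encoded one: every entry is replaced by its vector of bits, and the block matrices $E_m^d$ built from $D_m$ are adjoined to $B$ so that their integer span consists exactly of the weak binary representations of $\vec{0}$ (Claim~\ref{claim:binary-property}); this absorbs carries and keeps all entries in $[-2,2]$. You instead reduce directly from $\forall\exists$-3SAT, using the pair of equations $X_j = 2Z_j + W_j$ and $W_j + \bar W_j = 1$ to force $W_j = X_j \bmod 2$ -- the forcing argument is sound, since $W_j,\bar W_j \in \nat$ with $W_j + \bar W_j = 1$ pins $W_j$ to $\{0,1\}$ and division with remainder then determines $W_j$ and $Z_j$ uniquely from $X_j$, so the universal quantification over $\nat^m$ projects surjectively onto $\{0,1\}^m$ and large $X_j$ cannot create spurious clause satisfactions. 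Your argument is more elementary and self-contained: it bypasses \textsc{Simultaneous Subset Sum} and the weak-binary-representation machinery entirely, and since unary hardness implies binary hardness it would also yield the lower bound of Theorem~\ref{thm:inclusion-zvas} in one stroke. What the paper's route buys in exchange is a reusable binary-to-unary transfer principle for systems of the form $\forall\vec{x}\,\exists\vec{y}.\,A\vec{x}+B\vec{y}=\vec{v}$, plus the $\ComplexityFont{\Pi_2^\P}$-completeness of \textsc{Simultaneous Subset Sum} itself, which the authors present as a contribution of independent interest. One cosmetic point: if a clause repeats a literal, the corresponding coefficient in $B$ becomes $2$ or $3$ rather than $1$; this is harmless for the unary size bound, but you should either note it or normalise clauses to have distinct literals.
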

\begin{proof}
  We reduce the inclusion problem with numbers encoded in binary to
  the inclusion problem with numbers encoded in unary. More precisely,
  we transform the problem described in
  Equation~(\ref{eqn:zvas-inclusion-vectors}) into an instance of the
  same problem with numbers encoded in unary. Recall that this problem
  is to decide, given matrices $A \in \zint^{d\times r}$, $B \in
  \zint^{d\times s}$ and a vector $\vec{v} \in \zint^d$ whether
  \begin{align*}
    \text{for all } \vec{x} \in \nat^r, \text{there exists } 
    \vec{y} \in \nat^s \text{ such that } A\cdot \vec{x} + B\cdot \vec{y} =
    \vec{v}.
  \end{align*}
  We construct an instance $(A', B', \vec{v}')$ of the same problem
  but where $A', B'$ and $\vec{v}$ use only numbers among
  $\{-2,-1,0,1,2 \}$. We first introduce some auxiliary definitions.

  Let $m$ be the minimal number of bits sufficient to write in binary every
  number occurring in $A$, $B$ and $\vec{v}$ disregarding the signs---%
  in other words, $m$ is the smallest
  natural number such that the absolute value of every entry of $A$
  (respectively $B$ and $\vec{v}$) is smaller than $2^m$. Given a
  vector $\vec{x} = (x_1, \dots, x_d) \in [-2^m+1,2^m-1]^d$, define
  $b(\vec{x}) \in \{-1, 0,1\}^{d\cdot m}$, the \emph{binary expansion}
  of $\vec{x}$, as
  \begin{align*}
    b(\vec{x}) = (x_1^{(m-1)}, x_1^{(m-2)}, \dots, x_1^{(0)}, x_2^{(m-1)}, \dots,
    x_d^{(0)})
  \end{align*}
  where, for $i \in [d]$, $x_i = \sum_{j=0}^{m-1} x_i^{(j)} \cdot 2^j$
  denotes the unique binary representation of $x_i$ (for negative
  numbers, coefficients $x_i^{(j)}$ are in $\{-1,0\}$, and for positive numbers
  they are in $\{ 0,1 \}$). Conversely, given
  \begin{equation*}
  \vec{y} = (y_1^{(m-1)},
  y_1^{(m-2)}, \dots, y_1^{(0)}, y_2^{(m-1)}, \dots, y_d^{(0)}) \in
  \zint^{d\cdot m},
  \end{equation*}
  we define the reverse function $r(\vec{y})
  = (r_1, \dots, r_d) \in \zint^d$ with $r_i = \sum_{j=0}^{m-1}
  y_i^{(j)} \cdot 2^j$ for $i \in [d]$. Note that for any $\vec{x}\in
  [-2^m+1, 2^m-1]^d$, $r(b(\vec{x})) = \vec{x}$, but $b(r(\vec{y}))$
  is not necessarily equal to $\vec{y}$ since components of $\vec{y}
  \in \zint^{d\cdot m}$ do not have to belong to $\{ 0,1\}$ or $\{-1,
  0\}$. %% Therefore the reverse function $r$ \emph{is not} the inverse
  %% of $b$.

  Let us also introduce the following definition.
  Given an integer $n\in \zint$, its \emph{weak binary
    representation}
%% \footnote{To the authors' knowledge, such a
%%     decomposition has no standard name, but the hyper-binary
%%     representation is defined similarly with each power of $2$ allowed
%%     at most twice.}
  is an expansion of~$n$ as a sum of powers of~$2$ with arbitrary
  coefficients from~$\zint$ (the
  usual binary representation only allows coefficients~$0$ and~$1$).
  Then, with $d=1$, the set $\{ \vec{y} \in
  \zint^m : r(\vec{y}) = n \}$ is the set of all weak binary
  representations of $n$ of \emph{height~$m$}. For instance, $(0,0,1,1,1)$
  and $(1,0,-1,-3,1)$ are two weak binary representations of $7$, both of height~$5$.

  Now define the matrix $D_m = (d_{i,j})\in [-2,2]^{m\times
    (m-1)}$ by the following rule:
\[
d_{i,j} \defeq 
\begin{cases}
   1 & \text{if } i=j, \\
   -2 & \text{if } i=j+1, \text{ and} \\
   0 & \text{otherwise}.
\end{cases}
\]
For instance, 
\begin{align*}
D_5 =
\begin{pmatrix}
1 & 0 & 0 & 0 \\
-2 & 1 & 0 & 0 \\
0 & -2 & 1 & 0 \\
0 & 0 & -2 & 1 \\
0 & 0 & 0 & -2 
\end{pmatrix}
.
\end{align*}
The image $\{ D_m \cdot\vec{z} \ : \ \vec{z} \in \zint^{m-1} \}$ of
the matrix $D_m$ is the set of all weak binary representations of
height $m$ of the integer $0$:
\begin{claim}
\label{claim:binary-property}
For every vector $\vec{y} \in \zint^m$,
$r(\vec{y}) = 0$ iff
there exists a $\vec{z} \in \zint^{m-1}$ such that $D_m\cdot \vec{z} = \vec{y}$.
\end{claim}
We prove Claim~\ref{claim:binary-property} at the end of the section.
By linearity, it follows that the set $\{ \vec{y} + D_m\cdot \vec{z} : \vec{z} \in \zint^m \}$ is the set of all weak binary representations of $r(\vec{y})$. For instance, 
\begin{align*}
\begin{pmatrix}
1\\ 0 \\ -1 \\ -3 \\ 1
\end{pmatrix} 
+ D_5 \cdot 
\begin{pmatrix} -1 \\ -2 \\ -2 \\ 0 \end{pmatrix}
 = 
\begin{pmatrix}
0 \\ 0 \\ 1 \\ 1 \\ 1
\end{pmatrix}
,
\end{align*}
and all height-$5$ weak binary representations of $7$ can be
obtained this way. The same property can be obtained for vectors of
dimension $d$ by defining
\begin{align*}
E_m^d =
\begin{pmatrix}
D_m & 0 & \dots & 0 \\
0 & D_m & \dots & 0 \\
\vdots & \vdots & \ddots & \vdots\\
0 & 0 & \dots & D_m 
\end{pmatrix} \in [-2,2]^{(d\cdot m) \times (d \cdot (m-1))}.
\end{align*}

We now define the following instance of \ZVAS\ inclusion:
\begin{itemize}
\item $A' = b(A) \in \{-1, 0,1\}^{(d\cdot m) \times r}$, i.e., $A'$ is
  the matrix whose columns are the vectors $b(\vec{a})$ for every
  column $\vec{a}$ of $A$;
\item $B' = 
\begin{pmatrix}
b(B) & E_m^d & -E_m^d
\end{pmatrix}
\in [-2,2]^{(d\cdot m) \times (s+ 2d\cdot (m-1))}$; and
\item $\vec{v}' = b(\vec{v}) \in \{-1, 0, 1\}^{d \cdot m}$.
\end{itemize}

This instance is a yes-instance if and only if for all $\vec{x} \in
\nat^r$, there exists a $(\vec{y},\vec{z}_1,\vec{z}_2) \in \nat^{s+
  2d\cdot (m-1)}$ such that
\begin{align*}
b(A)\cdot \vec{x} + 
\begin{pmatrix}
b(B) & E_m^d & -E_m^d
\end{pmatrix}
\cdot
\begin{pmatrix}
\vec{y} \\
\vec{z}_1 \\
\vec{z}_2
\end{pmatrix}
= b(\vec{v}),
\end{align*}
i.e., if there exists a $\vec{t} \in \zint^{d\cdot (m-1)}$ such that
\begin{align*}
b(B)\cdot \vec{y} + E_m^d \cdot \vec{t} = b(\vec{v}) - b(A)\cdot\vec{x}.
\end{align*}
According to Claim~\ref{claim:binary-property}, there exists such a
$\vec{t}$ if and only if $b(B)\cdot\vec{y}$ and $b(\vec{v}) -
b(A)\cdot\vec{x}$ are two weak binary representations of the same
vector. By application of $r$ on both sides and due to the linearity of
$r$, we obtain that for any $\vec x\in \nat^r$ there exists some $\vec
y\in \nat^s$ such that $A\cdot \vec x + B \cdot \vec y = \vec v$.
%% , that is, if and only if for any $\vec{x} \in \nat^r$ there
%% exists some $\vec{y} \in \nat^s$ such that $A\cdot\vec{x} +
%% B\cdot\vec{y} = \vec{v}$.
This completes the proof of Theorem~\ref{t:unary}.
\qed

\end{proof}

%Here follows the proof of equation~\ref{eqn:binary-property}:

It remains to prove Claim~\ref{claim:binary-property}. Recall that we
wish to show that for any $\vec{y} = (y^{(m-1)}, \dots, y^{(0)})$,
\begin{align*}
r(\vec{y}) \defeq \sum_{j=0}^{m-1} y^{(j)} \cdot 2^j = 0 \iff \exists 
\vec{z} .\ D_m \cdot \vec{z} = \vec{y}.
\end{align*}
Let $C_m$ be the square matrix that consists of the $m-1$ first rows of
$D_m$, i.e., $C_m$ is such that
\[
D_m =
\begin{pmatrix}
C_m  \\
0\cdots0 \ {-2}
\end{pmatrix}.
\]
%%
%% \left(\begin{array}{cccc} 
%% \multicolumn{2}{c}{C_m} \\
%% 0 \dots 0 & -2 
%% \end{array}\right)$.
%%
We then have:
\begin{align}
\label{eqn:img}
  \exists \vec{z} .\ D_m \cdot \vec{z} = \vec{y} \iff  
  \exists \vec{z} .\ C_m \cdot \vec{z} = 
  (y^{(m-1)},\dots, y^{(1)}) \text{ and } {-2}\cdot z^{(1)} = y^{(0)}.
\end{align}
Since $C_m$ is invertible, for every $\vec{y}$ there exists exactly
one $\vec{z} = (z^{(m-1)}, \dots, z^{(1)})$ such that $C_m \cdot
\vec{z} = (y^{(m-1)},\dots, y^{(1)})$; in particular $z^{(i)} =
\sum_{j=i}^{m-1} y^{(j)} \cdot 2^{j-i}$ for every $i \in [m-1]$.  Note
that $\vec{z} \in \zint^{m-1}$ whenever $\vec{y} \in \zint^{m-1}$.
Therefore, the equivalence in~\eqref{eqn:img} can be reformulated
and continued as follows:
\begin{align*}
 \exists \vec{z} .\ D_m \cdot \vec{z} = \vec{y} 
&\iff -2 \sum_{j=1}^{m-1} y^{(j)}\cdot 2^{j-1} = y^{(0)} \\
&\iff 0 = \sum_{j=1}^{m-1} y^{(j)}\cdot 2^{j} + y^{(0)}\cdot 2^0 \\
&\iff r(\vec{y}) = 0.
\end{align*}
This concludes the proof of Claim~\ref{claim:binary-property}.

%%% Local Variables: 
%%% mode: latex
%%% TeX-master: "main_new.tex"
%%% LocalWords: Parikh Petri et al Esparza Alla Davida Fraca Haddad Sistla priori SMT pre Karp pb PNS FO Karp Miller Verma Seidl Schwentick iter Kroening Wahl Melzer versa undef approximant unsafety BFC bfc soter CPUs Erlang sys SubsetSum prenex Huynh
%%% End:

%%% Local Variables: 
%%% mode: latex
%%% TeX-master: "main_new.tex"
%%% LocalWords: Parikh Petri et al Esparza Alla Davida Fraca Haddad Sistla priori SMT pre Karp pb PNS FO Karp Miller Verma Seidl Schwentick iter Kroening Wahl Melzer versa undef approximant unsafety BFC bfc soter CPUs Erlang sys SubsetSum prenex Huynh eqn zvas ss
%%% End:

% VI Conclusion

\section{Conclusion}
In this paper, we studied standard decision problems for \ZCFGr, an
extension of context-free commutative grammars with integer counters
and reset operations on them. We showed that reachability and
coverability are logarithmic-space inter-reducible in this class and
\NP-complete. For our \NP-upper bound, we showed that the reachability
relation for \ZCFGr\ can be defined by an existential formula of
Presburger arithmetic of polynomial size. In particular, this implies
that \ZCFGr\ have semi-linear reachability sets. Moreover, we showed
that inclusion for \ZCFGr \ is, in general, \coNEXP-complete, and
$\ComplexityFont{\Pi_2^\P}$-complete for \ZVAS, a subclass of \ZCFGr.
In order to show the latter lower bound, we introduced a new
$\ComplexityFont{\Pi_2^\P}$-complete decision problem
\textsc{Simultaneous Subset Sum}, a variant of the classical
\textsc{Subset Sum} problem.

One can view \ZCFGr\ as an over-approximation of classical reset Petri
nets in which places may contain a negative number of tokens. Hence,
Theorem~\ref{cor:reachability-presburger} enables witnessing
non-reachability in reset Petri nets in \coNP, i.e., at comparatively low
computational costs given that the problem is, in general,
undecidable. In particular, our characterization of reachability in
terms of existential Presburger arithmetic immediately enables the
use of SMT~solvers and thus paves the way for an easy implementation of our
approach. This approach, over-approximating reachability in Petri
nets, has recently been proved surprisingly efficient when applied to
real-world instances~\cite{BFHH15}. As for future work, it would be
interesting to investigate whether \ZCFGr\ can be extended with
transfer operations while retaining definability of their reachability
sets in Presburger arithmetic.

\subsubsection*{Acknowledgments.} 
We would like to thank Sylvain Schmitz, Philippe Schnoebelen and the
anonymous reviewers of RP'14 for their helpful comments and
suggestions on an earlier version of this paper.

%%% Local Variables: 
%%% mode: latex
%%% TeX-master: "main_new.tex"
%%% End:

%% The Appendices part is started with the command \appendix;
%% appendix sections are then done as normal sections
%% \appendix

%% \section{}
%% \label{}

%% If you have bibdatabase file and want bibtex to generate the
%% bibitems, please use
%%

\bibliographystyle{elsarticle-num-names}
\bibliography{bibliography}

%% else use the following coding to input the bibitems directly in the
%% TeX file.

\end{document}